\newtheorem{theorem}{Theorem}[section]
\newtheorem{corollary}{Corollary}[theorem]
\newtheorem{lemma}[theorem]{Lemma}
\newcommand{\N}[1]{#1^{\scaleto{(N)}{5pt}}}
\newcommand{\new}[1]{\textcolor{black}{#1}}
\newcommand{\newv}[1]{\textcolor{black}{#1}}
\newcommand{\bs}[1]{\boldsymbol{#1}}
\DeclareMathOperator*{\E}{{\mathbb{E}}}
\begin{document}

\title{\new{Precision Bounds on Continuous-Variable State Tomography using Classical Shadows}}
\author{Srilekha Gandhari} 
\affiliation{Joint Center for Quantum Information and Computer Science, NIST/University of Maryland, College Park, Maryland 20742, USA}
\author{Victor V. Albert}
\affiliation{Joint Center for Quantum Information and Computer Science, NIST/University of Maryland, College Park, Maryland 20742, USA}
\affiliation{National Institute of Standards and Technology, 100 Bureau Dr, Gaithersburg, MD 20899, USA}
\author{Thomas Gerrits}
\affiliation{National Institute of Standards and Technology, 100 Bureau Dr, Gaithersburg, MD 20899, USA}
\author{Jacob M. Taylor}
\affiliation{Joint Center for Quantum Information and Computer Science, NIST/University of Maryland, College Park, Maryland 20742, USA}
\affiliation{National Institute of Standards and Technology, 100 Bureau Dr, Gaithersburg, MD 20899, USA}
\affiliation{Joint Quantum Institute, NIST/University of Maryland, College Park, Maryland 20742, USA}
\author{Michael J. Gullans}
\affiliation{Joint Center for Quantum Information and Computer Science, NIST/University of Maryland, College Park, Maryland 20742, USA}
\affiliation{National Institute of Standards and Technology, 100 Bureau Dr, Gaithersburg, MD 20899, USA}

\date{\today}

 \begin{abstract}
    Shadow tomography is a framework for constructing succinct descriptions of quantum states \new{using randomized measurement bases}, called classical shadows, with powerful methods to bound the estimators used. 
    We recast existing experimental protocols for continuous-variable \new{ quantum state} tomography in the classical-shadow framework, obtaining rigorous bounds on the \new{number of independent measurements needed} for estimating density matrices from these protocols.
    We analyze the efficiency of homodyne, heterodyne, photon number resolving (PNR), and photon-parity protocols.
    To reach a desired precision on the classical shadow of an $N$-photon density matrix with a high probability, we show that homodyne detection requires an order $\mathcal{O}(N^{4+1/3})$ measurements in the worst case, whereas PNR and photon-parity detection require $\mathcal{O}(N^4)$ measurements in the worst case (both up to logarithmic corrections).
    We benchmark these results against numerical simulation as well as experimental data from optical homodyne experiments. We find that numerical and experimental homodyne tomography significantly outperforms our bounds, exhibiting a more typical scaling of the number of measurements that is close to linear in $N$.
    We extend our single-mode results to an efficient construction of multimode shadows based on local measurements.
 \end{abstract}
 \maketitle


\section{Introduction}

The ability to estimate states accurately with as few measurements as possible --- the primary goal of quantum state tomography --- yields an indispensable tool in quantum information processing~\cite{NielsenChuang}. 
State characterization is necessary for realizing quantum technologies in finite-dimensional tensor-product quantum systems governed by discrete variables (DV), as well as in quantum systems governed by continuous variables (CV)~\cite{Review_Braunstein}, such as electromagnetic or mechanical modes.

The recent development of classical-shadow tomography yields a succinct way to learn information about a DV quantum state through \new{randomly chosen measurements}, in such a way that the learned information can later be used to predict  properties of the state~\cite{aaronson2018shadow,huang_kueng_preskill_2020,Elben2022}. 
A key benefit of shadow tomography is that it comes with rigorously proven guarantees on the minimum number of samples required to achieve high accuracy with high probability. 
However, the best guarantees require structures from finite-dimensional DV spaces, such as state and unitary designs~\cite{huang_kueng_preskill_2020} or SIC-POVMs~\cite{Stricker2022}, thereby obscuring any practical extension to the intrinsically infinite-dimensional CV systems.

In this paper, we apply the shadow tomography framework to a large family of well-known and well-utilized CV tomographic protocols.
We distill the mathematical tools behind shadow guarantees in such a way that eliminates dependence on strictly DV ingredients and allows us to reformulate established CV protocols in the shadow framework.
This reformulation yields accuracy guarantees for expectation values of local observables whose required number of samples scales polynomially with both the number of participating CV modes and the maximum occupation (\textit{a.k.a.}~photon) number of each mode.

CV tomography is a long-standing and well-developed field~\cite{Raymer_Lvovsky}. Focusing on established protocols allows us to boost their credibility, as opposed to developing new protocols that may be equally  efficient theoretically but whose practical utility is left as an open question (e.g.~\cite[Sec.~VI.A]{Iosue22}).
Specifically, our work focuses on tomographic methods that can be easily implemented experimentally with existing quantum optical technology, which underpins fiber-based and free-space quantum communication and key distribution~\cite{Review_Braunstein}. 
To illustrate the connection to optics and our commitment to the mantra of classical shadows ``measure first, ask questions later''~\cite{Elben2022}, we apply our framework retroactively to experimental quantum-optical data published in 2010~\cite{NIST_experiment}.  
Naturally, our theoretical guarantees apply equally well if the corresponding protocols are performed in other CV platforms such as microwave cavities coupled to superconducting qubits~\cite{girvin2014circuit}, motional degrees of freedom of trapped ions~\cite{DeNeeve2022}, as well as optomechanical~\cite{Aspelmeyer2014} and nano-acoustic~\cite{MacCabe2020} resonators.

\begin{table*}[t]
\centering
\begin{tabular}{|c c c c|} 
 \hline
 Protocol & Basis Expansion & Upper Bound Scaling & Numerics \\ [0.5ex] 
 \hline\hline
 Single-mode Homodyne shadows & Position states/Pattern functions & $N^{4+1/3}$ & $N^2$  \\ [1ex]
 Single-mode PNR shadows & Displaced Fock states/$T$-operator & $N^4$ & $N^4$ \\ [1ex]
 Design based CV tomography \cite{Iosue22} & 3-design & $N^4$ & - \\ [1ex]
 Multimode shadows & Local shadows & $N^{4 M} - N^{(6+1/3)M}$  & - \\ [1ex] 
 \hline
\end{tabular}
\caption{\newv{A summary of the different protocols that exist, to our knowledge, studying the sample complexities of CV state tomography. For each protocol, we indicate the measurement basis and a bound on the scaling of sample complexity (with system size $N$ and number of modes $M$ for multimode shadows). The sample complexity of design based shadows was estimated using known variance bounds  \cite{huang_kueng_preskill_2020, Iosue22}.   Additionally, for single-mode measurement methods considered in this paper, we provide the scaling observed in numerical simulations.}}
\label{tab:protocols}
\end{table*}

The first method that we are able to recast in the shadow-tomography framework is homodyne detection~\cite{Vogel89,Smithey93,LEONHARDT199589,bisio,Raymer_Lvovsky,D_Ariano_1995,Rosati2022}. We show that the number of samples needs to scale at most as the fifth power of the maximum occupation number (up to a logarithmic correction) to yield reliable \textit{homodyne shadow} estimates of a single-mode state. Since we require a finite occupation-number cutoff, this bound only holds for portions of states supported on finite-dimensional subspaces of the infinite-dimensional Fock space. Our theoretical guarantees use several important technical bounds proved in an earlier work~\cite{Guta_paper}. While there have been studies of the statistical efficiency of homodyne tomography (e.g.,~\cite{PhysRevA.57.5013,guta2003,Guta_paper,Albini2009,Alquier2013}), to our knowledge, \new{our bounds are an improvement over previous results as they allow us to analyze the sample complexity for the convergence of the operator norm of the estimated state to its ideal value.  These improvements are made possible because of our use of matrix concentration inequalities reviewed in Sec. II (see Ref.~\cite{Tropp2016,Tropp}) that have not been employed in past work on CV state tomography.}  
A recent paper has also considered the statistical efficiency of homodyne using these improved matrix concentration inequalities \cite{Becker22}.  Here, we present more explicit formulas for the bounds in the case of states with a hard photon number cutoff.  Adapting our approach to the formulation in \cite{Becker22} also leads to explicit bounds in their case that appear to give agreement between the two papers in areas where there is overlap.  
\newv{On a related note, a recent paper Ref.~\cite{Rosati2022} studies the sample complexity of estimating unknown Gaussian or Generalized (i.e. linear combinations of) Gaussian CV states, where a risk function/error is minimized. The distance between the outcomes of the original and estimated state, averaged over channels and measurements, was the metric of error/risk function. Our work, on the other hand, considers an absolute error via the infinity norm of the difference of the original and estimated state. }
\new{Finally, we remark that our results provide upper bounds on the scaling of the number of samples required, i.e. the sample complexity, for estimating the state from random measurements.  We leave open the important problem of obtaining lower bounds on the optimal sample complexity \cite{huang_kueng_preskill_2020}, which have been examined, e.g., in the case of homodyne tomography \cite{Leonhardt96}.}

Building off the formalism in \cite{Wunsche_1991}, we are also able to recast a large class of protocols utilizing displaced Fock states and make contact with heterodyne, photon-number resolved (PNR), and photon-parity tomography \cite{Raymer_Lvovsky}. For the latter two methods, we show that the number of samples needs to scale at most as the fourth power of the maximum occupation (up to a logarithmic correction) to yield reliable \textit{PNR shadow} estimates.  For heterodyne measurements, our upper bound is effectively exponential in the maximum occupation.

Following our theoretical analysis of single-mode systems, we provide a generalization to multimode states.  We prove that local CV shadow data can be used to reconstruct $k$-local reduced density matrices on any subset of the modes with a sample complexity, polynomial in the total number of modes and exponential in $k$.

We support our theoretical analysis with multiple numerical simulations and the analysis of existing experimental homodyne data. We begin by verifying that both the shadow methods indeed construct reliable estimates of a target state.  We then determine how the number of samples required for both the methods scales with the maximum occupation number \(N\). Through simulations involving the reconstruction of the vacuum state, coherent state superposition (CSS) states (i.e., cat states) and random pure states, we observe that the scaling for PNR and photon-parity measurements is similar to the calculated upper bound, whereas homodyne measurements exceed theoretical expectations and surpass the observed PNR scaling in every case.  
We also provide a basic demonstration of homodyne tomography on multimode separable and entangled states, finding that the latter require more samples to reach the same precision.  \newv{Our results are summarized in Table \ref{tab:protocols}.}

We begin with a general result for estimating density matrix elements of a single-mode CV state using a generic shadow-based protocol in Sec.~\ref{section:single_mode_result}. Its application to homodyne measurements and photon-number resolving measurements is discussed in Sec.~\ref{section:applications}. In Sec.~\ref{section:multimode}, we extend our theory to multimode CV states, and support our methods by numerical results and analysis of experimental data in Sec.~\ref{section:numerical_results}. We conclude in Sec.~\ref{section:conclusions}.

\section{Single-mode Shadows:
~~~~~~~~~General case}\label{section:single_mode_result}

Here, we distill the necessary mathematical tools from DV shadow-based accuracy guarantees in a general lemma that we apply to specific CV protocols in later sections. We adapt the slight reformulation from~\cite[Appx. A]{huang2021provablyefficient} of the original shadow work, in which guarantees are shown exclusively in terms of the system state. This allows us to derive sampling guarantees that are suboptimal relative to the qubit case, but are sufficiently general to be applied to numerous CV tomographic protocols.  \new{We assume loss-free propagation and  detection for our theoretical derivations.}

A single-mode CV density matrix can be represented in various ways, e.g., using Wigner, \(Q\), \(P\)~\cite{HILLERY1984121,Ferrie_2011} or more general representations~\cite{Wunsche_1991}, moment expansions~\cite{PhysRevA.54.5291}, homodyne measurements~\cite{LEONHARDT199589,bisio,Raymer_Lvovsky}, or outer products of occupation-number (\textit{a.k.a.} Fock) or other simple basis states. 
All of these ways effectively express the density matrix as a linear combination of some basis set of operators
$\{\sigma(\mu)\}$ with corresponding coefficients \(p(\mu)\), where \(\mu\) parameterizes some generally continuous and non-compact space \(\cal X\),
\begin{equation}\label{eq:rho-expansion}
    \rho=\int_{\cal X} d\mu \, p(\mu)\, \sigma(\mu)~.
\end{equation}

Assuming a non-negative and normalized weight function \(p(\mu)\) and sufficiently well-behaved space \(\cal X\) allows one to re-express the above formula in a statistical fashion, i.e., as an expectation value over samples \(\sigma\), sampled from \(\cal X\) according to the probability distribution governed by \(p(\mu)\),
\begin{equation}\label{eq:rho-stat}
    \rho = \E_{\mu\sim\cal X} \sigma(\mu) = \E_{\mu} \sigma(\mu)~.
\end{equation}
\new{We remark that classical shadow tomography in our formulation refers to tomographic protocols with the restriction that $p(\mu)$ is a probability distribution over random measurement settings and their outcomes.  A general quantum state tomography protocol based solely on Eq.~\eqref{eq:rho-expansion} may not have a formulation as 
a type of classical shadow tomography.  Crucially, though, we will see below that there is a natural analysis of homodyne tomography that fits into the classical shadow framework.}

The above formulation allows for a simple application of matrix concentration inequalities (loosely speaking, the matrix version of the law of large numbers; see Refs.~\cite{Tropp2016,Tropp}). If we are given a set of $T$ independent samples ${\cal{T}}=\{\mu_1,\ldots,\mu_T\}$, 
we can construct an estimator for the state by averaging the basis operators corresponding to the outcomes in the set,
\begin{equation}
    \sigma_T=\frac{1}{T}\sum_{i=1}^T \sigma(\mu_i).
\end{equation}
We define the above estimator as a \textit{shadow of size T}.
Such shadows recover the state by taking their expectation value over all sample sets \(\cal T\),
\begin{equation}\label{eq:sample-average}
    \E_{{\cal T}}~\sigma_T = \frac{1}{T}\sum_{i=1}^T \E_{{\mu_i}} \sigma(\mu_i) = \E_{{\mu}}~\sigma(\mu) = \rho~.
\end{equation}
This convergence property is a necessary starting point for shadow estimation, but it says nothing about \textit{how fast} the convergence happens as the number of samples increases \new{\cite{terminology}.}  

In order to bound the rate of convergence of the above estimate, we restrict \(\rho\) to finite-dimensional truncated versions \(\N{\rho}\), with the dimension cutoff \(N\) assumed to be sufficiently large so that the truncated operators capture the essence of the original ones. 
The resulting matrices have non-zero entries only for row and column indices $0$ through $N-1$. 
A ``regularized'' shadow is constructed with snapshots projected into the $(N-1)$-photon subspace, using $P_N=\sum_{i=0}^{N-1}|i\rangle \langle i|$,
\begin{equation}
    \N{\sigma_T}\equiv P_N \, \sigma_T \,P_N = \frac{1}{T}\sum_{i=1}^T P_N\,\sigma(\mu_i) \,P_N.
\end{equation}
Our analysis can be readily extended to more general soft-cutoffs by replacing the projectors onto fixed photon number spaces by operators with smooth cuttoffs \cite{smoothcutoff}. 
Note also that we do not re-normalize the above regularized density matrices or shadows. Doing so introduces additional factors that would complicate our calculations. Hence, for the sake of ease and uniformity, we chose to work with the regularized matrices as they are.

We now bound the difference between the sample average \(\sigma^{(N)}_T\) and the actual average \(\rho^{(N)}=P_N \rho P_N\), backing out the minimum shadow size \(T\) required for an accurate estimate. 
We employ the infinity norm of the difference $\N{\sigma_T}-\N{\rho}$ to quantify the error. 
\begin{lemma}\label{Lemma_1}(a)
Fix $\epsilon, \delta \in (0,1)$, and let $\N{\sigma_T}$ be an $N$-dimensional classical shadow of a single-mode continuous-variable quantum state $\rho$. If the size of the classical shadow is at least $T$, such that
\begin{equation}\label{eq:general-bound}
T = \frac{2N^2\left(\nu^2+R\epsilon/2N\right)}{\epsilon^2}\left(\log\,2N+\log\,1/\delta\right),
\end{equation}
where 
\begin{subequations}\label{eq:shadow-norms}
\begin{align}
    \nu^{2}&=\left\Vert \E_{{\cal T}}\,\left(\N{\sigma}_{T}\right)^{2}\right\Vert _{\infty}=\left\Vert \E_{\mu}\,\left(\N{\sigma}(\mu)\right)^{2}\right\Vert _{\infty}\label{eq:shadow-norms-nu}\\
    R &\geq\left\Vert \N{\sigma}(\mu)-\E_{\mu}\N{\sigma}(\mu)\right\Vert _{\infty},\label{eq:shadow-norms-R}
\end{align}
\end{subequations}
then the probability
\[ {\rm Pr}\left( \|\N{\sigma_T}-\N{\rho}\|_\infty \leq \epsilon \right) \geq 1-\delta. \]
\end{lemma}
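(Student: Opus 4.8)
The plan is to invoke a matrix Bernstein (matrix concentration) inequality on the sum of independent, self-adjoint random matrices $X_i = \tfrac{1}{T}\big(P_N\,\sigma(\mu_i)\,P_N - \rho^{(N)}\big)$, so that $\N{\sigma_T} - \N{\rho} = \sum_{i=1}^T X_i$. Each summand is mean-zero by Eq.~\eqref{eq:sample-average} (after conjugating by $P_N$), self-adjoint when $\sigma(\mu)$ is, and supported on the $N$-dimensional truncated subspace, so all the relevant operator norms are norms of $N\times N$ matrices. The matrix Bernstein bound requires two ingredients: a uniform bound on $\|X_i\|_\infty$ and a bound on the norm of the variance proxy $\big\|\sum_i \E X_i^2\big\|_\infty$. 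The first follows from the triangle inequality and the definition of $R$ in Eq.~\eqref{eq:shadow-norms-R}: $\|X_i\|_\infty \le R/T$. For the second, I would write $\E X_i^2 = \tfrac{1}{T^2}\big(\E(\N{\sigma}(\mu))^2 - (\rho^{(N)})^2\big)$ — using $\E(P_N\sigma P_N) = \rho^{(N)}$ — and bound its norm by $\nu^2/T^2$, dropping the $-(\rho^{(N)})^2$ term since it is negative semidefinite and hence only decreases the norm of the (positive semidefinite) difference; summing over $i$ gives variance proxy at most $\nu^2/T$.

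Next I would apply the standard operator Bernstein tail bound (e.g., Ref.~\cite{Tropp}): for a sum $S=\sum_i X_i$ of independent self-adjoint mean-zero matrices on a $d$-dimensional space with $\|X_i\|\le L$ and $\|\sum_i \E X_i^2\|\le \varsigma^2$, one has $\Pr(\|S\|\ge \epsilon)\le 2d\,\exp\!\big(-\epsilon^2/(2\varsigma^2 + 2L\epsilon/3)\big)$, or a slightly cruder variant with $2L\epsilon$ in place of $2L\epsilon/3$. Here $d=N$ (the rank of $P_N$; or $2N$ if one is careful about complex-Hermitian vs.\ real-symmetric dilation, which accounts for the $2N$ inside the logarithm), $\varsigma^2=\nu^2/T$, and $L=R/T$. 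Substituting and demanding the right-hand side be at most $\delta$ yields $T \ge \tfrac{2(\nu^2 + R\epsilon/(\text{const}))}{\epsilon^2}(\log 2N + \log 1/\delta)$; matching the stated constants (the $N^2$ appears because the norms $\nu^2$ and $R$ as defined are $O(N)$–$O(N^2)$ quantities, or alternatively a rescaling convention differs — one should track this carefully to land exactly on Eq.~\eqref{eq:general-bound}). Rearranging the inequality to solve for $T$ and confirming the algebra reproduces \eqref{eq:general-bound} exactly is routine but must be done with care about which Bernstein constant ($1/3$ vs.\ $1$) and which dimension ($N$ vs.\ $2N$) the authors use.

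The main obstacle I anticipate is purely bookkeeping rather than conceptual: reconciling the precise numerical constants and the appearance of the explicit $N^2$ prefactor (and the $2N$ inside the logarithm) with whichever version of the matrix Bernstein inequality is cited, and making sure the passage from "$\sigma(\mu)$ unbounded on Fock space" to "$P_N\sigma(\mu)P_N$ a bona fide $N\times N$ matrix with controlled norm" is airtight — in particular that $R$ and $\nu^2$ as defined in Eq.~\eqref{eq:shadow-norms} are finite, which is exactly what the protocol-specific analyses in Sec.~\ref{section:applications} will have to supply. A secondary subtlety is that $X_i$ need not be positive or negative semidefinite, so the variance term genuinely requires the cancellation argument $\E X_i^2 \preceq \tfrac{1}{T^2}\E(\N{\sigma}(\mu))^2$ rather than a naive $\|A-B\|\le\|A\|+\|B\|$ bound, which would weaken the result; I would state this semidefinite-ordering step explicitly. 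Everything else — mean-zero, independence, self-adjointness, the final rearrangement for $T$ — is mechanical.
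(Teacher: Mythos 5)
Your approach is essentially the paper's: matrix Bernstein applied to the centered, truncated snapshots, with the uniform bound $\|X_i\|_\infty\le R/T$ and the variance proxy controlled by $\nu^2$. Your semidefinite-ordering step $\E X_i^2 = T^{-2}(\E(\N{\sigma}(\mu))^2-(\N{\rho})^2)\preceq T^{-2}\,\E(\N{\sigma}(\mu))^2$ is in fact the correct justification for something the paper does by fiat --- it simply substitutes the centered second moment by $\nu^2$ and remarks afterwards that this only loosens the bound --- so do state it explicitly as you propose.

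The one item you flag as unresolved, the $N^2$ prefactor and the $R\epsilon/2N$ term, does not come from the magnitudes of $\nu^2$ and $R$ as you speculate. In the paper it comes from an additional norm-conversion step: after obtaining the tail bound $2N\exp(-\frac{T\tilde{\epsilon}^2/2}{\nu^2+R\tilde{\epsilon}/3})$ for ${\rm Pr}(\|\N{\sigma_T}-\N{\rho}\|_\infty\ge\tilde{\epsilon})$, the paper invokes $\|X\|_1\le N\|X\|_\infty$ and rescales $\tilde{\epsilon}\to\epsilon/N$, which injects the $N^2$ into the denominator and turns $R\tilde{\epsilon}/3$ into $R\epsilon/3N$ (the lemma's $R\epsilon/2N$ is a further loosening). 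Strictly speaking, after that substitution the controlled quantity is the trace norm, so the lemma as written in the infinity norm is loose by construction. Your direct infinity-norm bound, $T\ge 2(\nu^2+R\epsilon/3)\epsilon^{-2}(\log 2N+\log 1/\delta)$, is therefore tighter than Eq.~(\ref{eq:general-bound}) by roughly a factor of $N^2$; since the stated $T$ exceeds it for all $N\ge1$, your argument still establishes the lemma's conclusion, but to land on the formula (\ref{eq:general-bound}) exactly you must include the trace-norm conversion. The $2N$ inside the logarithm is just the dimensional prefactor $2d$ with $d=N$ for Hermitian matrices, as you surmised; no dilation is needed.
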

\begin{proof}
Applying the matrix Bernstein inequality~\cite{Tropp}\cite[Appx. A]{huang2021provablyefficient} to $\N{\sigma_T}-\N{\rho}$ and using Eq. (\ref{eq:sample-average}) yields
\[ {\rm Pr}\left(\|\N{\sigma_T}-\N{\rho}\|_\infty\geq\tilde{\epsilon}\right)\leq 2N \,\exp\left(-\frac{T\tilde{\epsilon}^2/2}{\nu^2+R\tilde{\epsilon}/3}\right)~.    \]
Using the equivalence relation $\|X\|_\infty\leq\|X\|_1\leq N \|X\|_\infty$ for an $N$ dimensional system, we have ${\rm Pr}(\|X\|_1\geq\epsilon) \leq {\rm Pr}(N\|X\|_\infty\geq\epsilon)$. Substituting  $\Tilde{\epsilon}/N\rightarrow \epsilon$ we get 
\[  {\rm Pr}\left(\|\N{\sigma_T}-\N{\rho}\|_\infty\geq\epsilon\right)\leq 2N \,\exp\left(-\frac{T\epsilon^2}{2N^2(\nu^2+R\epsilon/3N)}\right).    \]
Equating the error term on the right-hand side of the inequality to $\delta$ and rearranging yield the result (\ref{eq:general-bound}).
\end{proof} 
\newv{We substituted the original variance term $\Vert\mathbb{E}_\mu\left(\N{\sigma}-\N{\rho}\right)^2\Vert_\infty$ in Eq.~\ref{eq:shadow-norms-nu} , which measures the fluctuations about the mean $\N{\rho}$, with the term $\Vert\mathbb{E}_\mu \left(\N{\sigma}\right)^2\Vert_\infty$ for ease in calculations. We noticed that this leads to the derived bounds being higher than they would otherwise in some cases.}

We see that the scaling of the required sample size in Eq. (\ref{eq:general-bound}) depends logarithmically on the inverse error probability \(\delta\) and polynomially on the accuracy \(\epsilon\), recovering the corresponding portion of the qubit shadow result~\cite{huang_kueng_preskill_2020}. To determine dependence on occupation number cutoff \(N\), we need to know the corresponding scaling of \(R\) and \(\nu\) (\ref{eq:shadow-norms}), which are sometimes referred to as \textit{shadow norms}. This scaling depends on the detail of the protocol, and we proceed to apply this lemma to specific cases.

\section{Applications}\label{section:applications}
We apply the general shadow bound from Sec. \ref{section:single_mode_result} to a large family of CV tomographic protocols that include homodyne and photon-number resolving tomography.

\subsection{Homodyne Shadows}
Homodyne tomography is a popular technique used in quantum optics to make quadrature measurements of CV quantum states~\cite{Raymer_Lvovsky}. 
This technique allows one to measure a quadrature operator such as oscillator position, momentum, or any of their superpositions.

Let \(a,a^\dagger\) be the canonical mode lowering and raising operators satisfying \([a,a^\dagger]=1\).
A general quadrature operator,
\begin{equation}
    X_\theta={\textstyle \frac{1}{\sqrt{2}}} ( a e^{-i\theta}+a^\dagger e^{i\theta})~,
\end{equation}
is parameterized by \(\theta\in[0,\pi)\), where \(\theta=0\) (\(\theta=\pi/2\)) corresponds to the pure position (momentum) quadrature.  \new{We consider all possible values of $\theta$ to be admissible and do not consider the problem of choosing an appropriate discretization. In practice, we use pseudorandom number generators to generate the values of $\theta_i$.}
Each operator admits its own set of orthogonal but non-normalizable ``eigenstates'' \(|x_\theta\rangle\) with \(x_\theta\in\mathbb R\). 
Even though such ``eigenstates'' are not quantum states but are instead distributions, they do resolve the identity for each \(\theta\) and thus yield valid positive operator-valued measures (POVMs)~\cite{Holevo2011}.
For a given \(\theta\), a homodyne protocol yields a measurement in the corresponding set of eigenstates.

To perform a homodyne measurement, the unknown system state interferes with an ancillary mode in a coherent state \(|\alpha_{\text{LO}}\rangle\) (coherent local oscillator or LO) via a beam splitter, as shown in Fig.~\ref{fig:homodyne}(a). 
The phase of the coherent state LO is precisely the phase $\theta$ which defines the quadrature that is measured, and is picked uniformly from \([0,\pi)\). 
Both output amplitudes of the beam splitter, $N_1$ and $N_2$ respectively, are measured via homodyne detectors. Their difference $N_-$ is proportional to the quadrature amplitude of the signal:
\begin{equation}
N_-=\sqrt{2}\,|\alpha_{LO}|\,x_\theta \quad \text{with probability} 
    \quad\langle x_\theta|\rho|x_\theta\rangle~.
\end{equation}

\begin{figure}[ht]
    \includegraphics[width=0.47\textwidth]{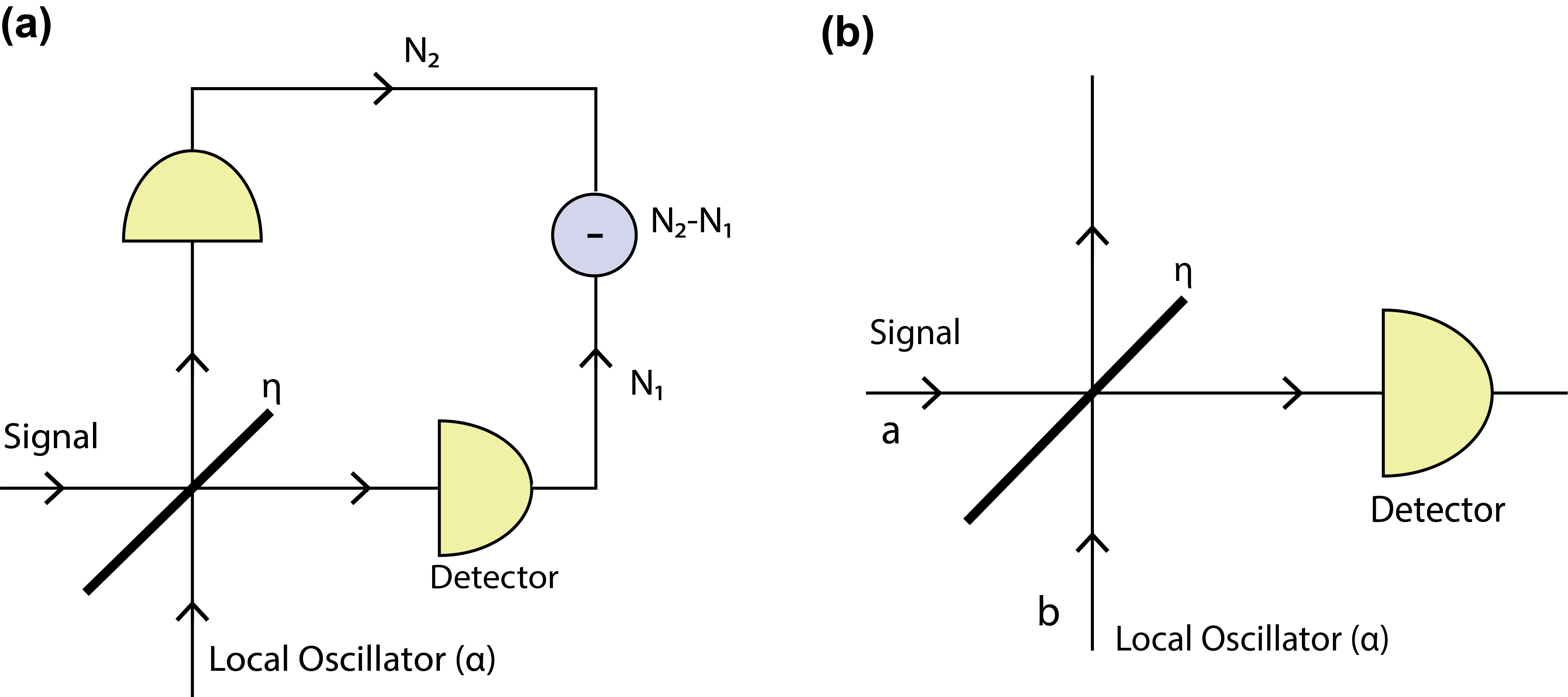}
    \caption{(a) Balanced homodyne detection: the unknown state signal interferes with a local oscillator (a coherent state $|\alpha\rangle$) and the intensities $N_1$ and $N_2$ at the output arms are measured. The difference in the intensities is proportional to the quadrature amplitude of the unknown state in the mode defined by the oscillator.(b) Photon-number resolving (PNR) measurement: the local oscillator is in a coherent state $|\alpha\rangle$, and detecting $N$ photon at the detector occurs with a probability $p(N|\alpha)=\langle N,\alpha|\rho|N,\alpha\rangle$. $\eta$ indicates the beam splitter in both cases.}
    \label{fig:homodyne}
\end{figure}

To cast homodyne tomography into shadow terminology, we first specialize the expansion in Eq.~(\ref{eq:rho-expansion}) to this case.
Substituting parameters $\mu \rightarrow (\theta,x_\theta)$, \newv{we get the expansion (see Eq.~(23) in Ref.~\cite{Raymer_Lvovsky}) }
\begin{equation}
    \rho=\int_{0}^{\pi}d\theta\,\int_{\mathbb{R}}dx_{\theta}\,{\textstyle \frac{1}{\pi}}\langle x_{\theta}|\rho|x_{\theta}\rangle F(x_{\theta},\theta)\,,
\end{equation}
where \(1/\pi\) corresponds to picking \(\theta\) uniformly from \([0,\pi)\), and \(\langle x_\theta|\rho|x_\theta\rangle\) to the conditional probability of the homodyne setup yielding \(x_\theta\) as the outcome. \newv{The matrix elements of the basis operators $\{F(x_\theta,\theta)\}$ have standard expansions in quantum optics literature. They are defined in terms of what are known as \textit{pattern functions} $f_{mn}(x)$~\cite{D'Ariano94,Raymer_Lvovsky}.
\begin{equation}
    \langle m|F(x_\theta,\theta)|n\rangle= e^{i(m-n)\theta} f_{mn}(x_\theta).
\end{equation}}
These pattern functions are symmetric in $m,n$, and are defined using Fock state wavefunctions $\psi_m(x)$ ($m^{th}$ energy eigen state of a harmonic oscillator) and $\phi_n(x)$ ($n^{th}$ non-normalizable solution of the Schr\"{o}dinger equation of a harmonic oscillator) as 
\begin{equation}
    f_{mn}(x) =\frac{\partial }{\partial x}\left(\psi_m(x)\phi_n(x)\right) \quad \text{for} \quad n\geq m.
\end{equation}

Because the quadrature probability distributions are normalized for each \(\theta\), the above equation can automatically be interpreted in the statistical fashion of Eq. (\ref{eq:rho-stat}). 
In this interpretation, basis operators \(F\) are sampled according to the probability distribution defined by \(\theta\) and \(\langle x_\theta|\rho|x_\theta\rangle\).
A regularized \textit{homodyne CV shadow} of size $T$, constructed using the samples $\{x_{\theta_i},\theta_i\}_{i=1}^T$, is therefore
\begin{equation}\label{eq:homshadows}
    \N{\sigma_T}=\, \frac{1}{T}\sum_{i=1}^T P_N F(x_{\theta_i},\theta_i) \,P_N~.
\end{equation}
\new{We remark that the randomized choice of $\theta_i$ independent of the state is what allows us to reinterpret this reconstruction method as an instance of classical shadows. We also remark that a homodyne shadow is distinct from the Wigner function projection of the state.  A shadow estimates the density operator in the Fock basis from a collection of homodyne samples with randomly chosen $\theta$. A Wigner function projection $\int dp_\theta W(x_\theta ,p_\theta ) =\langle{x_\theta}| \rho |{x_\theta}\rangle$, on the other hand, is the probability of obtaining specific measurement outcomes at a fixed values of $\theta$. }

We now adapt Lemma \ref{Lemma_1} to homodyne shadows, relying  on the bounds on pattern function matrix from Ref.~\cite{Guta_paper}. 
The following theorem gives an upper bound on the sample complexity of constructing shadows through homodyne samples, for an error $\|\N{\sigma_T}-\N{\rho}\|_\infty\leq \epsilon$ with high probability (greater than $1-\delta$).
\begin{theorem}\label{theorem:homodyne}
Using homodyne tomography, for $\epsilon,\delta \in (0,1)$, some positive constant $C_1$~\cite{Guta_paper}, and an N-dimensional homodyne shadow $\N{\sigma_T}$ (\ref{eq:homshadows}) of a single-mode CV state $\rho$,
\[{\rm Pr}(\|\N{\sigma_T}-\N{\rho}\|_\infty \leq \epsilon)\geq 1-\delta\] 
when the size of the shadow is at least
\[ T = \frac{2N^{13/3}C_1}{\epsilon^2}\left(\log\,1/\delta +\log\,2N\right).\]
\end{theorem}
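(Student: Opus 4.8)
The plan is to derive the theorem directly from Lemma~\ref{Lemma_1}. The homodyne shadow of Eq.~(\ref{eq:homshadows}) is exactly of the form to which that lemma applies, with snapshot operators \(\N{\sigma}(\mu)=P_N F(x_\theta,\theta)P_N\) and parameter \(\mu=(\theta,x_\theta)\) drawn from the homodyne distribution, so all that remains is to bound the two shadow norms \(\nu^2\) and \(R\) of Eq.~(\ref{eq:shadow-norms}) and substitute into Eq.~(\ref{eq:general-bound}). A useful first simplification: since \(F_{mn}(x_\theta,\theta)=e^{i(m-n)\theta}f_{mn}(x_\theta)\), we have \(F(x_\theta,\theta)=D_\theta\,F(x_\theta,0)\,D_\theta^{\dagger}\) with \(D_\theta=\mathrm{diag}(e^{in\theta})_{n\ge 0}\) a diagonal unitary that commutes with \(P_N\); hence \(\Vert P_N F(x_\theta,\theta)P_N\Vert_\infty=\Vert P_N F(x_\theta,0)P_N\Vert_\infty\) is independent of \(\theta\), and only the \(x\)-dependence of the truncated pattern-function matrix matters.

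Next I would bound both shadow norms by a single quantity, \(M_N:=\sup_{x\in\mathbb R}\Vert P_N F(x,0)P_N\Vert_\infty\). Because each \(F(x_\theta,\theta)\) is Hermitian, \(\bigl(\N{\sigma}(\mu)\bigr)^2\succeq 0\) and \(\bigl\Vert(\N{\sigma}(\mu))^2\bigr\Vert_\infty=\Vert\N{\sigma}(\mu)\Vert_\infty^2\); combining this with convexity of the operator norm (Jensen applied to the expectation over \(\mu\)) gives \(\nu^2=\bigl\Vert\E_\mu(\N{\sigma}(\mu))^2\bigr\Vert_\infty\le\E_\mu\Vert\N{\sigma}(\mu)\Vert_\infty^2\le M_N^2\). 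Likewise, the triangle inequality between a snapshot and its mean \(\N{\rho}\) gives \(\Vert\N{\sigma}(\mu)-\E_\mu\N{\sigma}(\mu)\Vert_\infty\le 2M_N\), so \(R=2M_N\) is admissible. This step throws away the variance cancellation that makes the qubit bound tight --- it is exactly the source of the looseness noted below Lemma~\ref{Lemma_1} --- and replaces a state-dependent average (recall the \(x_\theta\) weight is \(\langle x_\theta|\rho|x_\theta\rangle\)) with a state-independent supremum, which is what keeps the final bound universal in \(\rho\).

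The remaining input is a bound on \(M_N\), and here I would invoke the pattern-function estimates of Ref.~\cite{Guta_paper}. Writing \(\Vert P_N F(x,0)P_N\Vert_\infty\le\max_{0\le m<N}\sum_{n=0}^{N-1}|f_{mn}(x)|\) (a Schur/row-sum bound), the index-dependent pointwise bounds on the pattern functions proved in~\cite{Guta_paper}, which are uniform in \(x\), produce after summation over the \(N\)-dimensional block an estimate of the form \(M_N\le\sqrt{C_2C_3}\,N^{3/2}\), i.e. \(M_N^2\le C_2C_3\,N^3\), where \(C_2\) collects the constant in the pattern-function estimate and \(C_3\) the constant generated by the summation. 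Substituting \(\nu^2\le C_2C_3N^3\) and \(R=2\sqrt{C_2C_3}N^{3/2}\) into Eq.~(\ref{eq:general-bound}), and using \(\epsilon,\delta\in(0,1)\) together with \(N\ge 1\) to absorb the lower-order term \(R\epsilon/2N\) and all numerical factors into the redefined constants \(C_2,C_3\), yields \(\nu^2+R\epsilon/2N\le C_2C_3N^3\) and hence \(T=2N^5C_2C_3\,\epsilon^{-2}\bigl(\log 1/\delta+\log 2N\bigr)\), as claimed.

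The crux --- and the only genuinely hard step --- is the bound \(M_N=\mathcal{O}(N^{3/2})\). Since \(f_{mn}(x)=\partial_x(\psi_m(x)\phi_n(x))\) contains the non-normalizable Schr\"{o}dinger solution \(\phi_n\), which grows with \(x\), one cannot control \(M_N\) through Hilbert--Schmidt or other \(L^2\)-type estimates; the uniform-in-\(x\) control of \(|f_{mn}(x)|\) (and thus the correct power of \(N\)) has to come from the asymptotics of Hermite functions and irregular parabolic-cylinder functions near and beyond the classical turning points, or from the recursion relations satisfied by the pattern functions. This is precisely the technical content we import from Ref.~\cite{Guta_paper}; everything else is the mechanical substitution above.
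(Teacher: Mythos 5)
Your proposal follows the same skeleton as the paper's proof---specialize Lemma~\ref{Lemma_1} to the homodyne snapshots $P_N F(x_\theta,\theta)P_N$ and import the hard analytic input on pattern functions from Ref.~\cite{Guta_paper}---and your reductions (the $\theta$-independence via the diagonal unitary, $\nu^2\le\sup_{x}\|P_NF(x,0)P_N\|_\infty^2$ by the triangle inequality for operator-valued integrals, $R\le 2M_N$ from $\|\rho^{(N)}\|_\infty\le 1$) are all valid and mirror the paper's chain of inequalities. The one place you diverge is in how you obtain the crux estimate $M_N=\sup_x\|P_NF(x,0)P_N\|_\infty=\mathcal{O}(N^{3/2})$: you propose a Schur row-sum bound fed by uniform-in-$x$ pointwise estimates on individual $f_{mn}$, which is the shakiest part of your write-up---you would need $\max_m\sum_n|f_{mn}(x)|\lesssim N^{3/2}$ uniformly in $x$, and it is not established that the pointwise bounds in \cite{Guta_paper} assemble into exactly that after summation. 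The paper instead takes the two matrix-norm bounds of Theorem~3.2 of \cite{Guta_paper} off the shelf: the Frobenius bound $\|P_NF P_N\|_2^2\le C_2N^3$, which it uses for the variance via $\|X^2\|_\infty=\|X\|_\infty^2\le\|X\|_2^2$, and the operator-norm bound $\|P_NFP_N\|_\infty^2\le C_1N^{7/3}$, which it uses for $R$. Note that your $M_N\le\sqrt{C_2}\,N^{3/2}$ follows in one line from the Frobenius bound via $\|X\|_\infty\le\|X\|_2$, so the row-sum/parabolic-cylinder machinery you invoke is unnecessary; and since the sharper $N^{7/6}$ bound only improves the subleading $R\epsilon/2N$ term, both routes land on the same $T=\mathcal{O}\bigl(N^5(\log 2N+\log 1/\delta)/\epsilon^2\bigr)$.
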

\begin{proof}
From Theorem 3.2 of Ref.~\cite{Guta_paper}, we use  $\|\N{\sigma_T}\|^2_\infty \leq C_1 N^{7/3}$ and $\|\N{\sigma_T}\|^2_2\leq C_2N^3$, which \new{is valid for any value of homodyne angle $\theta$} and gives us 
\begin{equation}
    \|\N{\sigma_T}-\N{\rho}\|_\infty \leq \|\N{\sigma_T}\|_\infty+1\leq \sqrt{C_1}N^{7/6} + 1 = R
\end{equation}
and 
\begin{subequations}
\begin{align}
    \nu^{2}=\left\Vert \E_{{\cal T}}\,\sigma_{T}^{2}\right\Vert _{\infty}&=\left\Vert \int dx\,d\theta\,{\textstyle \frac{1}{\pi}}\langle x_{\theta}|\rho|x_{\theta}\rangle\sigma_{T}^{2}\right\Vert _{\infty}\\&\leq\int dx\,d\theta\,{\textstyle \frac{1}{\pi}}\langle x_{\theta}|\rho|x_{\theta}\rangle\|\sigma_{T}^{2}\|_{\infty}\\&\leq\int dx\,d\theta\,{\textstyle \frac{1}{\pi}}\langle x_{\theta}|\rho|x_{\theta}\rangle C_{1}N^{7/3}\\&\leq C_{1}N^{7/3}~.
\end{align}
\end{subequations}
Therefore, using Lemma \ref{Lemma_1}, the size of the shadow needs to be at least
\begin{align}
    T &= \frac{2N^2\left(\sigma^2+R\epsilon/2N\right)}{\epsilon^2}\left(\log\,2N-\log\,\delta\right) \\
    &\leq \frac{2N^2\left(C_1 N^{7/3}+1+\sqrt{C_1}N^{1/6}\epsilon/2\right)}{\epsilon^2} (\log\,2N+\log\,1/\delta).\nonumber
\end{align}
\end{proof}
We see that the upper bound on the minimum number of samples scales as $N^{13/3} \log\,N$ for a photon-cutoff $N$. 

Regarding the utility of homodyne shadows in calculating expectation values of observables, the pattern-function operators \(F\) are unequivocally more complex than the original qubit shadows~\cite{huang_kueng_preskill_2020}. Nevertheless, efficient methods are known to compute them using recursive relations~\cite{Richter2000}.

\subsection{Photon Number Resolving (PNR) Shadows}\label{subsec:PNR}

Our second application is based on the $T$-operator formalism developed in Ref.~\cite{Wunsche_1991}.
The relevant basis expansion of a single-mode CV state $\rho$ can be written in terms of what are called \textit{T-operators},
\begin{equation}\label{hetero_eq}
    \rho={\textstyle\frac{1}{\pi}}\int_{\mathbb{C}}d^2\alpha \,{\rm Tr}[\rho \,T(\alpha,\alpha^*)] \,\Bar{T}(\alpha,\alpha^*)~.
\end{equation}
Here, ${\rm Tr}[\rho T(\alpha,\alpha^*)]$ plays the role of a quasiprobability distribution over the complex plane $\mathbb{C}$, and the dual operators $\Bar{T}(\alpha,\alpha^*)$ are used to reconstruct the state given this distribution.

The $T$-operators are defined in terms of displaced Fock states, which are the result of applying a displacement operator $D(\alpha)=\exp(\alpha a^\dagger -\alpha^* a)$ on a Fock state:
$
    |\alpha,n\rangle \equiv D(\alpha)|n\rangle
$.
There exists a family of $T(\alpha,\alpha^*)$, and their corresponding duals $\Bar{T}(\alpha,\alpha^*)$, parameterized by $-1< r< 1$, such that the dual operator is given by
\begin{equation}
    \Bar{T}_r(\alpha,\alpha^*)=T_{-r}(\alpha,\alpha^*).
\end{equation}
$T(\alpha,\alpha^*)$ belonging to this family can be diagonalized as
\begin{equation}
    \begin{aligned}
        T_r(\alpha,\alpha^*)&= \sum_{n=0}^\infty \lambda^{(n)}_r D(\alpha)|n\rangle\langle n|D(\alpha)^\dagger \\
        &=\sum_{n}\lambda^{(n)}_r |\alpha,n\rangle\langle\alpha,n|~,
    \end{aligned}
\end{equation}
where the eigenvalues $\lambda^{(n)}_r$ are 
\begin{equation}\label{eq:Teigs}
    \lambda^{(n)}_r=\frac{2(-1)^n(1-r)^n}{\pi(1+r)(1+r)^n}~.
\end{equation}

Utilizing the above expansions, Eq.~(\ref{hetero_eq}) becomes
\begin{equation}\label{eq:integral}
    \rho=\int_{\mathbb C} \frac{d^2\alpha}{\pi}\sum_{n=0}^\infty \langle n,\alpha|\rho|n,\alpha\rangle \,\lambda^{(n)}_r \Bar{T}_r(\alpha,\alpha^*)~, 
\end{equation}
which we interpret as an expectation value of samples \(\lambda^{(n)}_r \Bar{T}_r(\alpha,\alpha^*)\) distributed according to the probabilities \(\frac{1}{\pi}\langle n,\alpha|\rho|n,\alpha\rangle\).
However, in a physical interferometric measurement procedure corresponding to this expansion [see Fig.~\ref{fig:homodyne}(b)], the \new{displacement} parameter \(\alpha\) comes from a non-compact set. This means that we need to first truncate the sample region in order to make the above a valid (i.e., normalizable) probability distribution.

We restrict the integration of Eq.~(\ref{eq:integral}) to be over a phase-space region \(\cal A\) of finite area \(A\). We can set this region to be, e.g., a disk of radius \(\alpha_{\text{max}}\). We will then be able to account for the truncated Fock space by setting \(\alpha_{\text{max}}^2 = N\) since that is the average occupation number of a coherent state with \(|\alpha|=\alpha_{\text{max}}\). In that case, \(A=4\pi N\) scales linearly with maximum photon number cutoff.

Projecting into an $N$-dimensional subspace implies that we consider only states with some max photon number \(N-1\) and regularized \(T\)-operators $\N{\Bar{T}_r}(\alpha,\alpha^*)$. Combining this with the truncation of the phase-space integration yields the following approximate expression for the truncated density matrix,
\begin{equation}\label{eq:approx-PNR}
    \N{\rho}\approx \int_{\cal A} \frac{d^2\alpha}{\pi A}\sum_{n=0}^{N-1} \langle n,\alpha|\rho|n,\alpha\rangle \,\, A \lambda^{(n)}_r \N{\Bar{T}_r}(\alpha,\alpha^*)~,
\end{equation}
in which we interpret $A \,\lambda^{(n)}_r\,\N{\Bar{T}_r} (\alpha,\alpha^*)$ as a \textit{PNR shadow} sampled from the probability distribution
\begin{equation}
    p(n,\alpha)=\frac{1}{A}\langle n,\alpha|\rho|n,\alpha\rangle
\end{equation}
with parameters \(0\leq n\leq N-1\) and \(\alpha \in\cal A\).

The eigenvalues $\lambda^{(n)}_r$ exhibit different behaviours for different values of $r$, and in Appx.~\ref{Appx_PNR} we analyze how many samples are required for density matrix estimation in each case. 
We obtain quartic scaling with \(N\) at \(r=0\), up to logarithmic corrections, and unfavorable exponential scaling for all other \(0<|r|<1\). 
In the favorable case, the upper bound on the minimum number of samples is
\begin{equation}
    T =  \frac{32 N^2\,A^2}{\pi^4 \epsilon^2}(\log\,2N+\log\,1/\delta) = \mathcal{O}(N^4 \log N)~,
\end{equation}
beating our theoretical estimates for homodyne shadows.
Therefore, we see that $T$-operators at $r=0$ have a favorable sample complexity scaling and can be used as a tomographic method to estimate an unknown state. 
The matrix elements of the $T$-operator in the Fock basis are equal to the matrix elements of the displacement operator up to a phase
\begin{equation}
\langle{m}|T_0(\alpha,\alpha^*)|{n} \rangle = (1)^{n}\langle{m}|D(2\alpha)|{n} \rangle,
\end{equation}
which can be used to efficiently compute the classical shadows.
Moreover, the case of $r=0$ can also be realized via photon-parity measurement tomography~\cite{Lutterbach97,Bertet02,Vlastakis13,Wang16}, providing another experimental technique to realize these shadows.

It is worth noting that in the limit $r=-1$, \(\Bar{T}_r(\alpha,\alpha^*)=T_{-r}(\alpha,\alpha^*)\) becomes a projector onto the coherent state $|\alpha\rangle$, making the measurement protocol equivalent to what is known as heterodyne measurement, achieved using a similar setup as homodyne measurement~\cite{Cives_Esclop_2000}. The basis expansion is given in terms of the Q-function $Q(\alpha) =\frac{1}{\pi} \langle{\alpha}|\rho|{\alpha}\rangle$.
However, the sample complexity upper bound diverges in this case.  It is possible to avoid this divergence by taking a limit $r \to -1$ as the number of samples increases.  Unfortunately, this method only achieves an upper bound that is exponential in $N$ because,
as noted above, for any $0<|r|  <1$, our sample complexity upper bound diverges exponentially in $N$ (see Appx. \ref{Appx_PNR}).

\section{Multimode Shadows}\label{section:multimode}

In this section, we show that it is possible to construct efficient representations of reduced density matrices on constant numbers of modes from local CV shadow data.

Multimode CV states are states that are made up of multiple \textit{modes} of the underlying system, such as different frequencies of light in an optical system. This is analogous to having a multi-qubit state in DV. Each mode is equipped with a generally continuous and non-compact space \(\mathcal{X}_i\) and \textit{M-mode}, possibly entangled, quantum states reside in \(\mathcal{X}=\bigotimes_{i=1}^M \mathcal{X}_i\). Let the space $\mathcal{X}$ be parameterized by a variable \(\bs{\mu}=(\mu^1,\ldots,\mu^M)\), and spanned by the set of basis operators \(\{\bs{\sigma}(\bs{\mu})\}\). Extending Eq.~(\ref{eq:rho-expansion}), the density matrix of a general multimode state belonging to this space can then be written as 
\begin{equation}\label{eq:multimode}
    \bs{\rho}=\int_\mathcal{X} d\bs{\mu}~ p(\bs{\mu})~ \bs{\sigma}(\bs{\mu}),
\end{equation}
where \(p(\bs{\mu})\) are the corresponding coefficients. Boldface symbols will be used to represent multimode variables and states henceforth. All the density operators and shadows considered in this section are the regularized versions, projected into the $N$-photon subspace of each mode. The superscripts indicating this are omitted for the sake of brevity.

Analogous to the  single-mode case in Sec.~\ref{section:single_mode_result}, we require a basis expansion of $\bs{\rho}$ such that \(p(\bs{\mu})\geq0~\forall\bs{\rho}\); the non-negativity of \(p(\bs{\mu})\) allowing it be interpreted as a probability distribution over $\bs{\mu}$. \new{We proceed by using an expansion of the form $p(\bs{\mu})=\langle \bs{\mu}|\bs{\rho}|\bs{\mu}\rangle$, where the quantum states or distributions $|\bs{\mu}\rangle$ parameterize the underlying space. For example, we can take $|\bs{\mu}\rangle$ to be local position state eigenstates in the case of homodyne tomography.}

\new{We construct shadows of multimode states using local, joint single-mode measurement.} For an estimator of a state $\bs{\rho}$ to qualify as a shadow, as introduced in Section Sec.~\ref{section:single_mode_result}, it must be equal to the state $\bs{\rho}$ in expectation. Below we show that it can indeed be satisfied for shadows constructed through local measurements. For this calculation, we utilize another expansion of a general quantum state $\bs{\rho}$ in terms of some set of separable operators \(\{\bs{\rho}_\alpha\}\) and corresponding complex coefficients \(\{c_\alpha\}\),
\begin{equation}\label{eq:op_expansion}
    \bs{\rho}=\sum_\alpha c_\alpha \bs{\rho}_\alpha=\sum_\alpha c_\alpha \bigotimes_{i=1}^M \rho_\alpha^i~.
\end{equation}
Such an expansion is possible irrespective of whether the state is separable or entangled.
The estimator becomes
\begin{subequations}
\begin{align}
    \E_{\bs{\mu}}~\bs{\sigma}(\bs{\mu})&=\int_{\mathcal{X}}d\bs{\mu}\times\langle\bs{\mu}|\bs{\rho}|\bs{\mu}\rangle\times\bigotimes_{i=1}^{M}\sigma(\mu^{i})\\&=\sum_{\alpha}c_{\alpha}\int_{\mathcal{X}}d\bs{\mu}\times\bigotimes_{i=1}^{M}\langle\mu^{i}|\rho_{\alpha}^{i}|\mu^{i}\rangle~\sigma(\mu^{i})\\&=\sum_{\alpha}c_{\alpha}\bigotimes_{i=1}^{M}\int_{\mathcal{X}_{i}}d\mu^{i}\langle\mu^{i}|\rho_{\alpha}^{i}|\mu^{i}\rangle\sigma(\mu^{i})\\&=\sum_{\alpha}c_{\alpha}\bs{\rho}_{\alpha}=\bs{\rho}~.
\end{align}
\end{subequations}

In the following theorem, we calculate the variance and thereby the sample complexity of estimating a multimode state through local CV shadows. This result reduces to the single-mode result from \ref{Lemma_1} when $M=1$.
\begin{theorem}\label{Theorem:multimode}
Fix $\epsilon, \delta \in (0,1)$, and let $\bs{\sigma}_T$ be an $N^M$-dimensional classical shadow obtained through local measurements $\bs{\sigma}(\bs{\mu})=\bigotimes_{i=1}^M \sigma(\mu^i)$ of an $M$-mode CV state $\bs{\rho}$ (\ref{eq:op_expansion}). If the size of the classical shadow is at least 
\begin{equation}\label{eq:multimodet}
    T = \frac{2N^{2M}(\nu^{2M}_1\sum_\alpha |c_\alpha|+R\epsilon/3N^M)}{\epsilon^2}(M\log2N+\log1/\delta),
\end{equation}
then
\[ {\rm Pr}\left( \|\bs{\sigma}_T-\bs{\rho}\|_\infty \leq \epsilon \right) \geq 1-\delta, \]
written in terms of the single-mode variance $\nu^2_1$ (\ref{eq:shadow-norms-nu}) and $R\geq \|\bs{\sigma}(\bs{\mu})-\E_{\bs{\mu}}\bs{\sigma}(\bs{\mu})\|_\infty$.
\end{theorem}
\begin{proof}
The measurement variance for an $M$-mode state proceeds similar to the single-mode case with the help of Eq. (\ref{eq:op_expansion}), 
\begin{subequations}
 \begin{align}
       \nu_{M}^{2}&=\left\Vert \E_{\bs{\mu}}~\left(\bs{\sigma}(\bs{\mu})\right)^{2}\right\Vert _{\infty}\\&=\left\Vert \int_{{\cal X}}\prod_{j=1}^{M}d\mu^{j}\times\langle\bs{\mu}|\bs{\rho}|\bs{\mu}\rangle\times\bigotimes_{i=1}^{M}\left(\sigma(\mu^{i})\right)^{2}\right\Vert _{\infty}\\&=\left\Vert \sum_{\alpha}c_{\alpha}\bigotimes_{i}\int_{{\cal {\cal X}}_{i}}d\mu^{i}\langle\mu^{i}|\rho_{\alpha}^{i}|\mu^{i}\rangle\left(\sigma(\mu^{i})\right)^{2}\right\Vert _{\infty}\\&\leq
       \sum_{\alpha}|c_{\alpha}|\,\prod_{i}\left\Vert \int d\mu^{i}\langle\mu^{i}|\rho_{\alpha}^{i}|\mu^{i}\rangle\left(\sigma(\mu^{i})\right)^{2}\right\Vert _{\infty}\\&\leq\sum_{\alpha}|c_{\alpha}|\,\prod_{i}\nu_{1}^{2} \\&\leq\nu_{1}^{2M}\sum_{\alpha}|c_{\alpha}|~.
 \end{align}
\end{subequations}
Using matrix Bernstein inequality and following a calculation similar to Lemma.~\ref{Lemma_1}, replacing $N$ with $N^M$, we get the desired result in Eq. (\ref{eq:multimodet}).
\end{proof}

\begin{figure*}[t]
     \centering
    \includegraphics[scale=0.57]{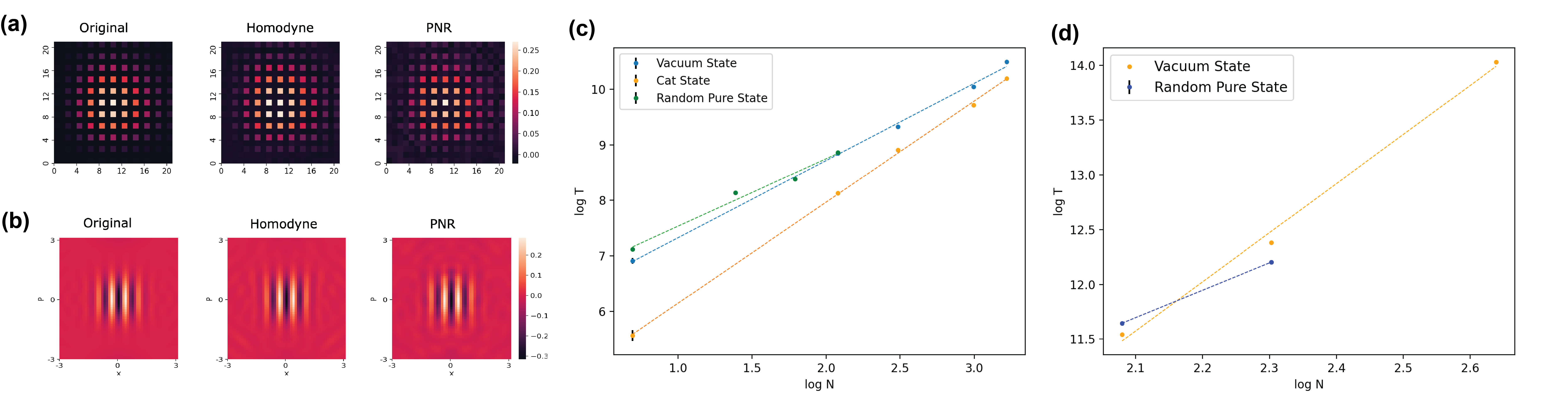}
        \caption{Single-mode shadows: (a) Density matrix elements and (b) Wigner function representation of the original state, a homodyne shadow and a PNR shadow of an even CSS state $|\psi\rangle \propto |\alpha\rangle +|-\alpha\rangle$ with $\alpha=\sqrt{10}$. (c-d) Numerical sample complexity of single-mode shadows; precision and confidence parameters are $\epsilon=0.1$ and $\delta=0.1$. (c)~Homodyne Shadows: Vacuum state ($T\propto N^{1.389}$), CSS state with amplitude 
        $\alpha=\sqrt{10}$ ($T\propto N^{1.820}$), and random pure states ($T\propto N^{1.212}$). (d)~PNR Shadows: Vacuum state ($T\propto N^{3.51}$) and pure random states.}
       \label{fig:combined}
\end{figure*}

As a consequence, we obtain a sample complexity bound on estimating $k$-local reduced density matrices, \new{which are the outcome of tracing over all but k modes of a multimode state. Since we are tracing out some of the modes, we are left with a ‘reduced density matrix’.}

\begin{corollary}\label{corollary:klocal}
If $\bs{\rho}_k$ denotes the reduced density matrix of $\bs{\rho}$ in $k\leq M$ modes, and its estimator constructed through local measurements is  $\bs{\sigma}_{T,k}$, such that the size of the shadow is at least 
\[T = \frac{2N^{2k}(\nu^{2k}_1\sum_\alpha |c_\alpha|+R\epsilon/3N^k)}{\epsilon^2}(k \log2N+\log1/\delta)\]
then
\[ {\rm Pr}\left( \|\bs{\sigma}_{T,k}-\bs{\rho}_k\|_\infty \leq \epsilon \right) \geq 1-\delta. \]
\end{corollary}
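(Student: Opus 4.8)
The plan is to recognize the $k$-mode reduced shadow as the $M\mapsto k$ specialization of Theorem~\ref{Theorem_3} applied to the appropriate marginal of the sampling distribution, and then to upgrade the resulting guarantee for a single subset to one holding for every $k$-mode subset via a union bound.

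First I would fix a subset $S$ of $k$ modes and define the reduced estimator $\bs{\sigma}_{T,k}=\frac{1}{T}\sum_{t=1}^{T}\bigotimes_{i\in S}\sigma(\mu^{i}_{t})$, i.e., the same local snapshots as in Theorem~\ref{Theorem_3} but with the tensor factors on the modes outside $S$ simply discarded (``measure first, ask questions later''). The one structural fact to check is that marginalizing the sampling distribution $p(\bs{\mu})=\langle\bs{\mu}|\bs{\rho}|\bs{\mu}\rangle$ over the modes not in $S$ reproduces exactly the distribution one would use to estimate the reduced state: substituting the separable expansion (\ref{eq:op_expansion}) and using ${\rm Tr}[\rho^{j}_{\alpha}]=1$ for $j\notin S$ shows that the marginal equals $\langle\bs{\mu}_{S}|\bs{\rho}_{k}|\bs{\mu}_{S}\rangle$, where $\bs{\mu}_{S}$ collects the $\mu^{i}$ with $i\in S$ and $\bs{\rho}_{k}=\sum_{\alpha}c_{\alpha}\bigotimes_{i\in S}\rho^{i}_{\alpha}$ carries the \emph{same} coefficients $c_{\alpha}$. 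The unbiasedness computation of Theorem~\ref{Theorem_3} then goes through verbatim and gives $\E_{\bs{\mu}}\bigotimes_{i\in S}\sigma(\mu^{i})=\bs{\rho}_{k}$.

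Next I would rerun the variance estimate. Because every traced-out tensor factor contributes only ${\rm Tr}[\rho^{j}_{\alpha}]=1$, the chain of norm inequalities in the proof of Theorem~\ref{Theorem_3} is reproduced with $M$ replaced by $k$, giving $\nu_{k}^{2}=\|\E_{\bs{\mu}}(\bigotimes_{i\in S}\sigma(\mu^{i}))^{2}\|_{\infty}\leq\nu_{1}^{2k}\sum_{\alpha}|c_{\alpha}|$, and analogously $R\geq\|\bigotimes_{i\in S}\sigma(\mu^{i})-\E_{\bs{\mu}}\bigotimes_{i\in S}\sigma(\mu^{i})\|_{\infty}$ is governed by the single-mode bounds. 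Applying the matrix Bernstein inequality to $\bs{\sigma}_{T,k}-\bs{\rho}_{k}$, which now lives in the $N^{k}$-dimensional subspace, and repeating the norm-equivalence and rearrangement steps of Lemma~\ref{Lemma_1} with $N^{k}$ in place of $N$, reproduces the claimed sample size with $N^{2k}$ in the numerator and the stated logarithmic factor. Finally, to make the statement hold for \emph{any} choice of $k$ modes at once, I would take a union bound over the $\binom{M}{k}$ subsets, which multiplies the Bernstein failure prefactor by $\binom{M}{k}$ and hence adds $\log\binom{M}{k}\leq k\log(eM/k)$ to the logarithmic factor --- the ``factor of $k$'' alluded to above; for a single prescribed subset no union bound is needed.

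The only place I expect to spend real effort is confirming that the reduction to Theorem~\ref{Theorem_3} is legitimate at the level of the concrete CV expansions --- that is, that partial-tracing the pattern-function or $T$-operator snapshots preserves the properties the earlier proofs rely on: non-negativity and normalization of the marginal sampling distribution, invariance of $\sum_{\alpha}|c_{\alpha}|$ under tracing out modes, and the fact that merely discarding the unwanted tensor factors of a snapshot (rather than replacing them by their traces) already yields an unbiased estimator of $\bs{\rho}_{k}$. Once this bookkeeping is settled, the rest is a direct transcription of the proofs of Lemma~\ref{Lemma_1} and Theorem~\ref{Theorem_3}.
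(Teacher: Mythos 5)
Your proposal follows essentially the same route as the paper: discard (trace out) the snapshot factors on the unwanted modes, verify unbiasedness of the reduced estimator via the separable expansion~(\ref{eq:op_expansion}), rerun the variance chain with $k$ factors to get $\nu_k^2\leq\nu_1^{2k}\sum_\alpha|c_\alpha|$, and feed this into the matrix Bernstein argument of Lemma~\ref{Lemma_1} with dimension $N^k$. If anything you are more explicit than the paper's proof, which stops at the variance bound and only alludes in the surrounding text to the union bound over subsets that you spell out.
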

\begin{proof}
Without loss of generality, let's assume $\bs{\rho}_k$ to be the reduced density matrix over the modes $1,\ldots,k$, with $k\leq M$, i.e. $\bs{\rho}_k={\rm Tr}_{k+1,..,M}(\bs{\rho})$. Define $\bs{\mu}_k=(\mu^1,\ldots,\mu^k)$ to parameterize this $k$-mode subspace, and the local operators
\[\bs{\sigma}_{k}(\bs{\mu}_k)={\rm Tr}_{k+1,..,M}(\bs{\sigma}(\bs{\mu}))=\bigotimes_{i=1}^k \sigma(\mu^i).\]
The classical shadow of the reduced matrix $\bs{\sigma}_{T,k}$ is then the average of $T$ such operators. The variance of estimating this reduced matrix, following earlier calculations, is
\begin{align}
    \nu_{k}^{2}=\left\Vert \E_{\bs{\mu}_{k}}\left(\bs{\sigma}_{k}(\bs{\mu}_{k})\right)^{2}\right\Vert _{\infty}\leq\nu_{1}^{2k}\sum_{\alpha}|c_{\alpha}|~.
\end{align}
\end{proof}

One can further generalize this result to derive sample complexity bounds for determining the reduced density matrix on any choice of $k$-local modes.  The proof follows from an application of the union bound and increases the sample complexity by a factor of $k$~\cite{huang_kueng_preskill_2020,huang2021provablyefficient}.

\newv{The sum of the absolute values of the coefficients in Eq.~\ref{eq:op_expansion} is a factor that affects the sampling complexity of multimode states. These coefficients can in general be complex. 
When the state is separable (i.e. can be written as a product state) this factor becomes 1, thereby resulting in a complexity that is equivalent to sampling all the modes individually, as expected of a product state. When the coefficients are all positive, they again sum up to 1 (since the trace of the state must equal 1), resulting in no increase in complexity as compared to a separable state.  In the general case, we can still upper bound the sampling complexity, but the overall scaling with the number of modes has a larger prefactor in the exponent.  These results are summarized in Table.~\ref{tab:protocols}. }

\newv{We would like to add that the proposed local-joint measurements are not the most general measurements for multimode CV states. Upon trying many different bases of entangled/global states, we have not found a suitable one that fulfills our criteria of a positive weight function for all states ($p(\mu)$ in Eq.~\ref{eq:multimode}), but have not ruled out the possibility of their existence.  Studies of whether global measurements can facilitate such a statistical interpretation, and of the sample complexity when using such bases are interesting avenues for future work.}

\begin{corollary}
    \newv{Similarly, we can estimate the expectation values for a list of multimode observables $\{O_1,\ldots,O_M$\}, local to $\{k_1,\ldots,k_M\}$ number of modes respectively and each mode within an $N$-photon subspace, from a shadow of size upper bounded by $T=\mathcal{O}\left(\max_i \frac{\|O_i\|}{\epsilon^2} \log (M/\delta)\right)$
    such that
    \begin{align}
        \Pr\left(|Tr(\bs{O}_i \bs{\rho}) - Tr(\bs{O}_i \bs{\sigma}_T)|\leq \epsilon\right) \geq 1-\delta \quad \forall\, 1\leq i\leq M.
    \end{align}}
\end{corollary}
\begin{proof}
    \newv{Since the 1-norm of matrix is the sum of its singular values, and using the Holder's inequality,
    \begin{equation}
        |Tr(\bs{O} \bs{\rho})| \leq \|\bs{O} \bs{\rho}\|_1  \leq \|\bs{O}\|_1 \|\bs{\rho}\|_\infty.
    \end{equation}
    Therefore,
    \begin{equation*}
        \begin{aligned}
        Pr \left(|Tr(\bs{O}_i(\bs{\rho}-\bs{\sigma}_T))| \leq \epsilon\right) &\leq Pr\left(\|\bs{O}\|_1 \|\bs{\rho}\|_\infty \geq \epsilon\right) \\
        &=Pr\left( \|\bs{\rho}\|_\infty \geq \frac{\epsilon}{\|\bs{O}\|_1}\right).
    \end{aligned}
    \end{equation*}
    Using matrix Bernstein inequality, this implies
    \begin{equation*}
    \begin{aligned}
        Pr (|Tr(&\bs{O}_i(\bs{\rho}- \bs{\sigma}_T))| \geq  \epsilon ) \\&\leq 2N^{k_i} \,\exp\left(-\frac{T\left(\epsilon/\|\bs{O}\|_1\right)^2}{2N^{2k_i}(\nu^{2k_i}+R\epsilon/3N^{k_i}\|\bs{O}\|_1)}\right)~,
    \end{aligned}
    \end{equation*}
    where is $\nu$ and $R$ are as defined in Theorem.~\ref{Theorem:multimode}.
    Equating the upper bound on probability to $\delta/M$, and using the union bound \cite{huang_kueng_preskill_2020}, a shadow size upper bounded by 
    \begin{equation}
        T = \max_i \, \frac{2\|\bs{O}_i\|_1 N^{2k_i}\nu^{2k_i}}{\epsilon^2}\left(k_i\log\,2N+\log\,M/\delta\right)
    \end{equation}
    suffices to estimate the list of multimode observables $\{\bs{O}_i\}_{i=1}^M$ with the desired accuracy.
    }
\end{proof}

\section{Numerical and Experimental tests}\label{section:numerical_results}

In this section, we verify our shadow construction methods and present numerical results on sample complexity scaling, with the maximum occupation number \(N\), of single-mode shadows. We also analyze experimental homodyne data and demonstrate the reconstruction of multimode states.

\subsection{Single mode simulations} \label{subsec:single_mode_simulations}

To analyze the applicability of our theoretical guarantees, we construct estimators of density matrices of certain target states using homodyne and PNR shadows.  \new{The  measurement bases in all cases were generated using uniform pseudorandom number generators.}
We then analyze how close they are to the target states. 
Figure~\ref{fig:combined}(a) shows how a reconstructed CSS state with mean-photon number $10$ compares to the original density matrix elements for both homodyne and PNR tomography. A comparison in terms of the Wigner function is shown in Fig.~\ref{fig:combined}(b). 
To achieve equally accurate shadows such that $\|\N{\sigma_T}-\N{\rho}\|_\infty=0.1$, we had to use $5\times10^4$ and $10^6$ samples for homodyne and PNR, respectively.

Of primary interest in the classical shadow formalism is the number of samples required to achieve a given target precision with high-probability.  
 
\newv{This minimum shadow size for a given state $T$ was numerically obtained by repeating the estimation process for several shadow sizes and identifying the one which results in a precision $\epsilon$, with a probability of at least $1-\delta$. We observed that if we are looking for a fixed error in norm (i.e. fixing $\epsilon$), increasing the shadow size decreases the probability with which this error occurs (i.e. $\delta$) as the estimation becomes more accurate. In fact, we observed that $T$ for a fixed $\epsilon$ is proportional to $\log{1/\delta}$ and this helped us narrow down our search for minimum shadow sizes corresponding to our chosen pair ($\epsilon$, $\delta$)=(0.1, 0.1). The scaling with the occupational number cutoff is then obtained by finding the minimum shadow size for different $N$, and the error bars indicate the statistical errors in estimating $T$ via this process. In Fig.~\ref{fig:combined}~(c)-(d), we plotted this observed minimum size for different states, for homodyne and PNR shadows respectively.}
\\

We ran our simulations for a vacuum state $|\psi\rangle=|0\rangle$, a CSS state $|\psi\rangle\propto |\alpha\rangle+|-\alpha\rangle$ with mean photon number $|\alpha|^2=10$, and random pure states. The random pure states of size $N$ had non-zero support only up to $(N-1)^{\text{th}}$ Fock state. The results show that homodyne has a scaling better than $N^2$ for the cases we tried, consistently outperforming the theoretical upper bound. \newv{For  vacuum states, PNR on the other hand has a scaling much closer to the predicted upper bound of $N^4$.  We also tested the scaling of PNR tomography of random pure states, but were computationally limited in the accessible sizes.  Over the range we observed, the results are consistent with the scaling of the vacuum state.}

\subsection{Experimental data}\label{section:experiment}

Here, we analyze the sample complexity of experimental homodyne data from Ref.~\cite{NIST_experiment}. The experiment consisted of creating \new{Coherent State Superpositions} (CSSs), where the created states were approximations to small-amplitude, even CSS states $|\psi\rangle\propto|\alpha\rangle+|-\alpha\rangle$.  
Squeezed vacuum was generated through spontaneous parametric down-conversion in a \(\text{KNbO}_{3}\) non-linear crystal: an up-converted laser pulse at $430$~nm created the squeezed vacuum at $860$~nm. After spectral filtering, the squeezed vacuum was incident on a weakly reflecting beam splitter with reflectivity $R$. Photons detected in the reflected path herald a CSS emerging from the transmitted port of the beam splitter. The resulting CSS was then directed to a homodyne detection setup. The homodyne setup consisted of a 50/50 beam splitter, two high-efficiency photodiodes and a low-noise amplification circuit.  \new{The local oscillator was created with a strong laser field ($\approx 10^9$~photons per pulse) that allows access to the quadratures through balanced detection.} The heralded photon detection was done using either two single-photon avalanche diodes (SPADs)~\cite{SPAD} or one transition edge sensor (TES)~\cite{TES}. The experiment analyzed here used both heralded-photon detection schemes. The CSSs were heralded upon the detection of two heralding photons, generating the approximation of a small-amplitude even cat state.

\begin{figure*}[t]
     \centering
     \includegraphics[scale=0.6]{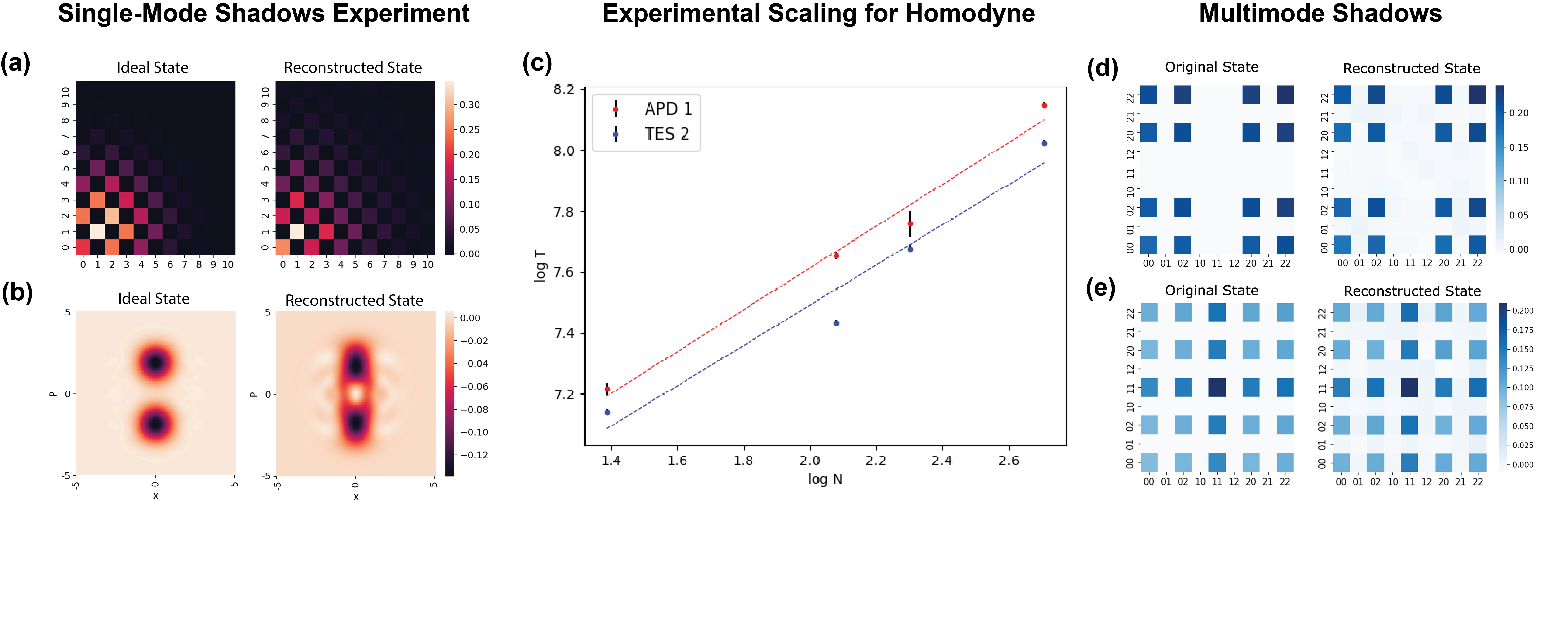}
     \caption{Experimental test of single-mode shadows:  (a) Density matrix elements and (b) Wigner function representation of the \new{ideal} CSS state \new{from theory} (left) and the reconstructed state \new{from experimental} (right) homodyne measurements. \new{Ideal} CSS state: $|\psi\rangle \propto |\alpha\rangle+|-\alpha\rangle$, $\alpha=1.32$, $T=3.10^5$. (c) Sample complexity scaling with maximum photon number $N$, from the homodyne data of the experiment. The two methods `TES2' and `APD1' correspond to photon-detection through transition edge sensor and single-photon avalanche diode, respectively. The scaling shown is for $\epsilon=0.4$ and $\delta=0.1$. (d-e) Homodyne two-mode shadows: Density matrix elements of the original states and numerically reconstructed states. The number of samples $T=5.10^4$, $N=2$ and $\alpha=\sqrt{1.5}$ in both cases. (d)~2-mode separable state: $|\Psi\rangle = |\psi_\alpha\rangle\otimes |\psi_\alpha\rangle$, where $|\psi_\alpha\rangle\propto |\alpha\rangle+|-\alpha\rangle$, and the infinity norm distance of the reconstructed state from the ideal state is $\epsilon=0.03$. (e)~2-mode entangled state: $|\Psi\rangle \propto |\alpha,\alpha\rangle+|-\alpha,-\alpha\rangle$, and $\epsilon=0.05$. 
     }
     \label{fig:multimode_verification}
\end{figure*}

Using the homodyne data from this experiment, we constructed an estimate of the state, an example of which is shown in Fig.~\ref{fig:multimode_verification}(a-b), and studied how close the reconstructed states are to the target states. \new{Before analyzing the sample complexity, the experimental data was pre-processed to account for propagation losses and detection inefficiencies  \cite{NIST_experiment}.} 

\newv{We then repeated the same numerics as in Sec.~\ref{subsec:single_mode_simulations} to find the minimum shadow size ($T$) required to achieve a fixed ($\epsilon$, $\delta$) for different occupational number ($N$) cutoffs. Working with a given dataset means our shadow size is limited to the size of the dataset and that accuracy can’t be improved arbitrarily. Fig.~\ref{fig:multimode_verification}(c) shows how $T$ changed with $N$, and we observe a scaling of $T \propto N^{0.659\pm 0.001}$ for a \new{\st{cat} CSS} with $\alpha=1.16$ using the TES, and as $T\propto N^{0.685\pm 0.001}$ for a \new{\st{cat} CSS} with $\alpha=1.32$ using the SPADs. This scaling of approximately $T\propto N^2$ aligns with our numerical simulations of single-mode homodyne shadows,  and is a further testament to the reliability of our simulations.
}

\subsection{Multimode simulations}
 
We now provide an example reconstruction of both separable and entangled two-mode states.

In Fig.~\ref{fig:multimode_verification}(d) we can see original and reconstructed density matrices of a separable 2-mode state $|\Psi\rangle = |\psi_\alpha\rangle\otimes |\psi_\alpha\rangle$, where $|\psi_\alpha\rangle$ is an even CSS state of amplitude $\alpha=\sqrt{1.5}$. In Fig.~\ref{fig:multimode_verification}(e), we have the reconstruction of an entangled state $|\Psi\rangle=|\alpha\rangle \otimes |\alpha\rangle+|-\alpha\rangle\otimes|-\alpha\rangle$, with the same number of measurements. The infinity-norm distance $\epsilon$ between the original and reconstructed states is 0.03 and 0.05, respectively.  The increased sample complexity needed to reach a given precision for the entangled state tomography is consistent with our general expectations from the theoretical analysis in Sec.~\ref{section:multimode}.

\section{Outlook and Conclusions}\label{section:conclusions}
We developed a formalism to derive worst-case sample complexity of a large class of  state-reconstruction procedures for single-mode continuous-variable (CV) states \new{(Lemma \ref{Lemma_1}, Theorem \ref{theorem:homodyne})}. In an experimental context, our work provides a tool to compare the sample complexity of different measurement methods. For photon-number resolving measurements, we show through numerical simulations that our analytical derived upper bound is close to the observed complexity. In the case of homodyne tomography, we notice that the simulation results surpass our derived bound, rendering homodyne tomography more efficient than PNR and photon-parity tomography in practice.

We obtain the sample complexity of estimating multimode CV states by reconstructing shadows of the true states from local measurements on each mode, in the spirit of the qubit-based shadow tomography proposal~\cite{aaronson2018shadow,huang_kueng_preskill_2020,Elben2022}. 
We also consider a CV manifestation of a global version of the original shadow protocol, where one uses global measurements over varying linear combinations of the constituent modes for state reconstruction. Of particular note is that, in contrast to the discrete-variable systems, our analysis does not rely on the construction of state designs (approximate or otherwise) such as in~\cite[Sec.~VI.A]{Iosue22}.   Instead, we focus on adapting existing techniques for CV tomography to fit within the framework of randomized measurements.  

In summary, we recast existing homodyne and photon-number-resolving protocols as shadow-tomography protocols and show that such protocols yield good local estimates of a multimode state using a number of samples that is polynomial in the number of modes \new{\ref{corollary:klocal}}. 
Our CV shadow framework can be extended to analyze other metrics like the total variation distance between two outcome probability distributions, which is valuable for verifying sampling experiments like Boson sampling~\cite{Hangleiter22}. 
Another remaining open question is to determine robustness of CV protocols to noise, e.g.,  within the framework of robust shadow estimation~\cite{Chen21}.  Deriving rigorous lower bounds that can match the numerically and experimentally observed scaling for homodyne tomography is also an important outstanding challenge.

\begin{acknowledgments}
We thank Yi-Kai Liu and Dominik Hangleiter  for helpful discussions. MJG and VVA thank Joe Iosue and Kunal Sharma for discussions, as well as collaborations on~\cite{Iosue22}.  We thank  S. Becker, N. Datta, L. Lami and C. Rouz{\' e} for  helpful discussions about the relations of our results to \cite{Becker22} and for pointing out a change in our homodyne bounds proof that allows an improvement from  $\mathcal{O}(N^5)$ to $\mathcal{O}(N^{4+1/3})$ scaling. VVA thanks Olga Albert and Ryhor Kandratsenia for providing daycare support throughout this work. This work is supported in part by NIST grant 70NANB21H055{\_}0 and NSF QLCI grant OMA-2120757. 

\end{acknowledgments}
\vspace{5pt}
\appendix 
\section{Scaling with PNR Tomography}\label{Appx_PNR}
Beginning with the approximate expression for a truncated density matrix in the displaced Fock basis (\ref{eq:app_pnr}), we want to estimate the sample complexity of PNR shadows.
\begin{equation}\label{eq:app_pnr}
    \N{\rho}\approx \int_{\cal A} \frac{d^2\alpha}{\pi A}\sum_{n=0}^{N-1} \langle n,\alpha|\rho|n,\alpha\rangle \,\, A \lambda^{(n)}_r \N{\Bar{T}_r}(\alpha,\alpha^*)~
\end{equation}
The above expansion in terms of \(T_r(\alpha,\alpha^*)\) is valid for \(-1< r< 1\) as mentioned in Sec.~\ref{subsec:PNR}. Depending on whether $r$ is positive or negative, one of $\lambda^{(n)}_r$ or $\lambda^{(n)}_{-r}$ (the latter being an eigenvalue of $T_{-r}(\alpha,\alpha^*)$) increases with $n$ and the other decreases with $n$. For $r=0$, the eigenvalues $\lambda^{(n)}_0$ are independent of $n$. Using the triangle inequality and Eq.~(\ref{eq:Teigs}), the shadow norm bound (\ref{eq:shadow-norms-R}) becomes
\begin{subequations}\label{eq:app_R}
\begin{align}
    \|A \lambda^{(n)}_r \N{T_{-r}} - \N{\rho} \|_\infty &\leq A  \|\lambda^{(n)}_r \N{T_{-r}} \|_\infty +A \\
    &\leq A|\lambda^{(N)}_{-|r|} \, \lambda^{(0)}_{|r|}|+A 
    \\
    &\leq \frac{2A|\lambda^{(N)}_{-|r|}| }{\pi(1+|r|)}+A \equiv R,
\end{align}
\end{subequations}
with the equality holding at $r=0$. As for the variance from Eq.~(\ref{eq:shadow-norms-nu}), we have 
\begin{subequations}\label{eq:app_var1}
\begin{align}
    \nu^{2}&=\left\Vert \E_{n,\alpha}\left(A\lambda_{r}^{(n)}\N{T_{-r}}\right)^{2}\right\Vert _{\infty}\\&=\left\Vert \int_{{\cal A}}\frac{d^{2}\alpha}{\pi A}\,\sum_{n=0}^{N-1}p(n,\alpha)\left(A\lambda_{r}^{(n)}\N{T_{-r}}\right)^{2}\right\Vert _{\infty}\\
    &\leq\int_{{\cal A}}\frac{d^{2}\alpha}{\pi A}\,\sum_{n=0}^{N-1}p(n,\alpha)\,A^2 \left\Vert (\lambda_{r}^{(n)}\N{T_{-r}})^{2}\right\Vert _{\infty}~\\
    &\leq\int_{{\cal A}}\frac{d^{2}\alpha}{\pi A}\,\sum_{n=0}^{N-1}p(n,\alpha)\,A^2|\lambda^{(0)}_{|r|}\lambda^{(N)}_{-|r|}|^2\\
    &\leq A^2 |\lambda^{(0)}_{|r|}\lambda^{(N)}_{-|r|}|^2 
\end{align}
\end{subequations}
The remaining norm can be bounded using the formula (\ref{eq:Teigs}) as was done in Eq. (\ref{eq:app_R}), and the remaining sum and integral can be upper-bounded by a factor of \(A\), yielding
\begin{equation}\label{eq:app_var2}
    \nu^{2}\leq\frac{16\,A^2}{\pi^{4}(1-|r|^2)^{2}}\left(\frac{1+|r|}{1-|r|}\right)^{2N}~.
\end{equation}

Applying Lemma \ref{Lemma_1}, and using \(A = a N\), the upper bound on minimum number of samples is
\begin{equation}
    T = \frac{32\,a^2\, N^4}{\pi^4\epsilon^2(1-|r|^2)^2}\left(\frac{1+|r|}{1-|r|}\right)^{2N} \log\left(\frac{2N}{\delta}\right)~,
\end{equation}
growing exponentially in \(N\) for $r\neq 0$. 

\bibliographystyle{apsrev-nourl-title.bst}
\bibliography{main.bib}

\begin{thebibliography}{47}
\expandafter\ifx\csname natexlab\endcsname\relax\def\natexlab#1{#1}\fi
\expandafter\ifx\csname bibnamefont\endcsname\relax
  \def\bibnamefont#1{#1}\fi
\expandafter\ifx\csname bibfnamefont\endcsname\relax
  \def\bibfnamefont#1{#1}\fi
\expandafter\ifx\csname citenamefont\endcsname\relax
  \def\citenamefont#1{#1}\fi
\expandafter\ifx\csname url\endcsname\relax
  \def\url#1{\texttt{#1}}\fi
\expandafter\ifx\csname urlprefix\endcsname\relax\def\urlprefix{URL }\fi
\providecommand{\bibinfo}[2]{#2}
\providecommand{\eprint}[2][]{\url{#2}}

\bibitem[{\citenamefont{Nielsen and Chuang}(2011)}]{NielsenChuang}
\bibinfo{author}{\bibfnamefont{M.~A.} \bibnamefont{Nielsen}} \bibnamefont{and}
  \bibinfo{author}{\bibfnamefont{I.~L.} \bibnamefont{Chuang}},
  \emph{\bibinfo{title}{Quantum Computation and Quantum Information: 10th
  Anniversary Edition}} (\bibinfo{publisher}{Cambridge University Press},
  \bibinfo{year}{2011}).

\bibitem[{\citenamefont{Braunstein and van Loock}(2005)}]{Review_Braunstein}
\bibinfo{author}{\bibfnamefont{S.~L.} \bibnamefont{Braunstein}}
  \bibnamefont{and} \bibinfo{author}{\bibfnamefont{P.}~\bibnamefont{van
  Loock}}, \emph{\bibinfo{title}{Quantum information with continuous
  variables}}, \bibinfo{journal}{Rev. Mod. Phys.}
  \textbf{\bibinfo{volume}{77}}, \bibinfo{pages}{513} (\bibinfo{year}{2005}).

\bibitem[{\citenamefont{Aaronson}(2018)}]{aaronson2018shadow}
\bibinfo{author}{\bibfnamefont{S.}~\bibnamefont{Aaronson}},
  \emph{\bibinfo{title}{Shadow tomography of quantum states}},
  \bibinfo{journal}{arXiv:1711.01053}  (\bibinfo{year}{2018}).

\bibitem[{\citenamefont{Huang et~al.}(2020)\citenamefont{Huang, Kueng, and
  Preskill}}]{huang_kueng_preskill_2020}
\bibinfo{author}{\bibfnamefont{H.-Y.} \bibnamefont{Huang}},
  \bibinfo{author}{\bibfnamefont{R.}~\bibnamefont{Kueng}}, \bibnamefont{and}
  \bibinfo{author}{\bibfnamefont{J.}~\bibnamefont{Preskill}},
  \emph{\bibinfo{title}{Predicting many properties of a quantum system from
  very few measurements}}, \bibinfo{journal}{Nature Physics}
  \textbf{\bibinfo{volume}{16}}, \bibinfo{pages}{1050–1057}
  (\bibinfo{year}{2020}).

\bibitem[{\citenamefont{Elben et~al.}(2022)\citenamefont{Elben, Flammia, Huang,
  Kueng, Preskill, Vermersch, and Zoller}}]{Elben2022}
\bibinfo{author}{\bibfnamefont{A.}~\bibnamefont{Elben}},
  \bibinfo{author}{\bibfnamefont{S.~T.} \bibnamefont{Flammia}},
  \bibinfo{author}{\bibfnamefont{H.-Y.} \bibnamefont{Huang}},
  \bibinfo{author}{\bibfnamefont{R.}~\bibnamefont{Kueng}},
  \bibinfo{author}{\bibfnamefont{J.}~\bibnamefont{Preskill}},
  \bibinfo{author}{\bibfnamefont{B.}~\bibnamefont{Vermersch}},
  \bibnamefont{and} \bibinfo{author}{\bibfnamefont{P.}~\bibnamefont{Zoller}},
  \emph{\bibinfo{title}{{The randomized measurement toolbox}}},
  \bibinfo{journal}{arXiv:2203.11374}  (\bibinfo{year}{2022}).

\bibitem[{\citenamefont{Stricker et~al.}(2022)\citenamefont{Stricker, Meth,
  Postler, Edmunds, Ferrie, Blatt, Schindler, Monz, Kueng, and
  Ringbauer}}]{Stricker2022}
\bibinfo{author}{\bibfnamefont{R.}~\bibnamefont{Stricker}},
  \bibinfo{author}{\bibfnamefont{M.}~\bibnamefont{Meth}},
  \bibinfo{author}{\bibfnamefont{L.}~\bibnamefont{Postler}},
  \bibinfo{author}{\bibfnamefont{C.}~\bibnamefont{Edmunds}},
  \bibinfo{author}{\bibfnamefont{C.}~\bibnamefont{Ferrie}},
  \bibinfo{author}{\bibfnamefont{R.}~\bibnamefont{Blatt}},
  \bibinfo{author}{\bibfnamefont{P.}~\bibnamefont{Schindler}},
  \bibinfo{author}{\bibfnamefont{T.}~\bibnamefont{Monz}},
  \bibinfo{author}{\bibfnamefont{R.}~\bibnamefont{Kueng}}, \bibnamefont{and}
  \bibinfo{author}{\bibfnamefont{M.}~\bibnamefont{Ringbauer}},
  \emph{\bibinfo{title}{{Experimental single-setting quantum state
  tomography}}}, \bibinfo{journal}{arXiv:2206.00019}  (\bibinfo{year}{2022}).

\bibitem[{\citenamefont{Lvovsky and Raymer}(2009)}]{Raymer_Lvovsky}
\bibinfo{author}{\bibfnamefont{A.~I.} \bibnamefont{Lvovsky}} \bibnamefont{and}
  \bibinfo{author}{\bibfnamefont{M.~G.} \bibnamefont{Raymer}},
  \emph{\bibinfo{title}{Continuous-variable optical quantum-state tomography}},
  \bibinfo{journal}{Rev. Mod. Phys.} \textbf{\bibinfo{volume}{81}},
  \bibinfo{pages}{299} (\bibinfo{year}{2009}).

\bibitem[{\citenamefont{Iosue et~al.}(2022)\citenamefont{Iosue, Sharma,
  Gullans, and Albert}}]{Iosue22}
\bibinfo{author}{\bibfnamefont{J.~T.} \bibnamefont{Iosue}},
  \bibinfo{author}{\bibfnamefont{K.}~\bibnamefont{Sharma}},
  \bibinfo{author}{\bibfnamefont{M.~J.} \bibnamefont{Gullans}},
  \bibnamefont{and} \bibinfo{author}{\bibfnamefont{V.~V.}
  \bibnamefont{Albert}}, \emph{\bibinfo{title}{Continuous-variable quantum
  state designs: theory and applications}}, \bibinfo{journal}{arXiv:2211.05127}
   (\bibinfo{year}{2022}).

\bibitem[{\citenamefont{Gerrits et~al.}(2010)\citenamefont{Gerrits, Glancy,
  Clement, Calkins, Lita, Miller, Migdall, Nam, Mirin, and
  Knill}}]{NIST_experiment}
\bibinfo{author}{\bibfnamefont{T.}~\bibnamefont{Gerrits}},
  \bibinfo{author}{\bibfnamefont{S.}~\bibnamefont{Glancy}},
  \bibinfo{author}{\bibfnamefont{T.~S.} \bibnamefont{Clement}},
  \bibinfo{author}{\bibfnamefont{B.}~\bibnamefont{Calkins}},
  \bibinfo{author}{\bibfnamefont{A.~E.} \bibnamefont{Lita}},
  \bibinfo{author}{\bibfnamefont{A.~J.} \bibnamefont{Miller}},
  \bibinfo{author}{\bibfnamefont{A.~L.} \bibnamefont{Migdall}},
  \bibinfo{author}{\bibfnamefont{S.~W.} \bibnamefont{Nam}},
  \bibinfo{author}{\bibfnamefont{R.~P.} \bibnamefont{Mirin}}, \bibnamefont{and}
  \bibinfo{author}{\bibfnamefont{E.}~\bibnamefont{Knill}},
  \emph{\bibinfo{title}{Generation of optical coherent-state superpositions by
  number-resolved photon subtraction from the squeezed vacuum}},
  \bibinfo{journal}{Phys. Rev. A} \textbf{\bibinfo{volume}{82}},
  \bibinfo{pages}{031802(R)} (\bibinfo{year}{2010}).

\bibitem[{\citenamefont{Girvin}(2014)}]{girvin2014circuit}
\bibinfo{author}{\bibfnamefont{S.~M.} \bibnamefont{Girvin}},
  \emph{\bibinfo{title}{Circuit qed: superconducting qubits coupled to
  microwave photons}}, \bibinfo{journal}{Quantum machines: measurement and
  control of engineered quantum systems} pp. \bibinfo{pages}{113--256}
  (\bibinfo{year}{2014}).

\bibitem[{\citenamefont{de~Neeve et~al.}(2022)\citenamefont{de~Neeve, Nguyen,
  Behrle, and Home}}]{DeNeeve2022}
\bibinfo{author}{\bibfnamefont{B.}~\bibnamefont{de~Neeve}},
  \bibinfo{author}{\bibfnamefont{T.-L.} \bibnamefont{Nguyen}},
  \bibinfo{author}{\bibfnamefont{T.}~\bibnamefont{Behrle}}, \bibnamefont{and}
  \bibinfo{author}{\bibfnamefont{J.~P.} \bibnamefont{Home}},
  \emph{\bibinfo{title}{{Error correction of a logical grid state qubit by
  dissipative pumping}}}, \bibinfo{journal}{Nature Physics}
  \textbf{\bibinfo{volume}{18}}, \bibinfo{pages}{296} (\bibinfo{year}{2022}).

\bibitem[{\citenamefont{Aspelmeyer et~al.}(2014)\citenamefont{Aspelmeyer,
  Kippenberg, and Marquardt}}]{Aspelmeyer2014}
\bibinfo{author}{\bibfnamefont{M.}~\bibnamefont{Aspelmeyer}},
  \bibinfo{author}{\bibfnamefont{T.~J.} \bibnamefont{Kippenberg}},
  \bibnamefont{and}
  \bibinfo{author}{\bibfnamefont{F.}~\bibnamefont{Marquardt}},
  \emph{\bibinfo{title}{{Cavity optomechanics}}}, \bibinfo{journal}{Reviews of
  Modern Physics} \textbf{\bibinfo{volume}{86}}, \bibinfo{pages}{1391}
  (\bibinfo{year}{2014}).

\bibitem[{\citenamefont{MacCabe et~al.}(2020)\citenamefont{MacCabe, Ren, Luo,
  Cohen, Zhou, Sipahigil, Mirhosseini, and Painter}}]{MacCabe2020}
\bibinfo{author}{\bibfnamefont{G.~S.} \bibnamefont{MacCabe}},
  \bibinfo{author}{\bibfnamefont{H.}~\bibnamefont{Ren}},
  \bibinfo{author}{\bibfnamefont{J.}~\bibnamefont{Luo}},
  \bibinfo{author}{\bibfnamefont{J.~D.} \bibnamefont{Cohen}},
  \bibinfo{author}{\bibfnamefont{H.}~\bibnamefont{Zhou}},
  \bibinfo{author}{\bibfnamefont{A.}~\bibnamefont{Sipahigil}},
  \bibinfo{author}{\bibfnamefont{M.}~\bibnamefont{Mirhosseini}},
  \bibnamefont{and} \bibinfo{author}{\bibfnamefont{O.}~\bibnamefont{Painter}},
  \emph{\bibinfo{title}{{Nano-acoustic resonator with ultralong phonon
  lifetime}}}, \bibinfo{journal}{Science} \textbf{\bibinfo{volume}{370}},
  \bibinfo{pages}{840} (\bibinfo{year}{2020}).

\bibitem[{\citenamefont{Vogel and Risken}(1989)}]{Vogel89}
\bibinfo{author}{\bibfnamefont{K.}~\bibnamefont{Vogel}} \bibnamefont{and}
  \bibinfo{author}{\bibfnamefont{H.}~\bibnamefont{Risken}},
  \emph{\bibinfo{title}{Determination of quasiprobability distributions in
  terms of probability distributions for the rotated quadrature phase}},
  \bibinfo{journal}{Phys. Rev. A} \textbf{\bibinfo{volume}{40}},
  \bibinfo{pages}{2847} (\bibinfo{year}{1989}).

\bibitem[{\citenamefont{Smithey et~al.}(1993)\citenamefont{Smithey, Beck,
  Raymer, and Faridani}}]{Smithey93}
\bibinfo{author}{\bibfnamefont{D.~T.} \bibnamefont{Smithey}},
  \bibinfo{author}{\bibfnamefont{M.}~\bibnamefont{Beck}},
  \bibinfo{author}{\bibfnamefont{M.~G.} \bibnamefont{Raymer}},
  \bibnamefont{and} \bibinfo{author}{\bibfnamefont{A.}~\bibnamefont{Faridani}},
  \emph{\bibinfo{title}{Measurement of the wigner distribution and the density
  matrix of a light mode using optical homodyne tomography: Application to
  squeezed states and the vacuum}}, \bibinfo{journal}{Phys. Rev. Lett.}
  \textbf{\bibinfo{volume}{70}}, \bibinfo{pages}{1244} (\bibinfo{year}{1993}).

\bibitem[{\citenamefont{Leonhardt and Paul}(1995)}]{LEONHARDT199589}
\bibinfo{author}{\bibfnamefont{U.}~\bibnamefont{Leonhardt}} \bibnamefont{and}
  \bibinfo{author}{\bibfnamefont{H.}~\bibnamefont{Paul}},
  \emph{\bibinfo{title}{Measuring the quantum state of light}},
  \bibinfo{journal}{Progress in Quantum Electronics}
  \textbf{\bibinfo{volume}{19}}, \bibinfo{pages}{89} (\bibinfo{year}{1995}).

\bibitem[{\citenamefont{Bisio et~al.}(2009)\citenamefont{Bisio, Chiribella,
  D'Ariano, Facchini, and Perinotti}}]{bisio}
\bibinfo{author}{\bibfnamefont{A.}~\bibnamefont{Bisio}},
  \bibinfo{author}{\bibfnamefont{G.}~\bibnamefont{Chiribella}},
  \bibinfo{author}{\bibfnamefont{G.~M.} \bibnamefont{D'Ariano}},
  \bibinfo{author}{\bibfnamefont{S.}~\bibnamefont{Facchini}}, \bibnamefont{and}
  \bibinfo{author}{\bibfnamefont{P.}~\bibnamefont{Perinotti}},
  \emph{\bibinfo{title}{Optimal quantum tomography}}, \bibinfo{journal}{IEEE
  Journal of Selected Topics in Quantum Electronics}
  \textbf{\bibinfo{volume}{15}}, \bibinfo{pages}{1646} (\bibinfo{year}{2009}).

\bibitem[{\citenamefont{D{\textquotesingle}Ariano}(1995)}]{D_Ariano_1995}
\bibinfo{author}{\bibfnamefont{G.~M.} \bibnamefont{D{\textquotesingle}Ariano}},
  \emph{\bibinfo{title}{Tomographic measurement of the density matrix of the
  radiation field}}, \bibinfo{journal}{Quantum and Semiclassical Optics:
  Journal of the European Optical Society Part B} \textbf{\bibinfo{volume}{7}},
  \bibinfo{pages}{693} (\bibinfo{year}{1995}).

\bibitem[{\citenamefont{Rosati}(2022)}]{Rosati2022}
\bibinfo{author}{\bibfnamefont{M.}~\bibnamefont{Rosati}},
  \emph{\bibinfo{title}{{A learning theory for quantum photonic processors and
  beyond}}}, \bibinfo{journal}{arXiv:2209.03075}  (\bibinfo{year}{2022}).

\bibitem[{\citenamefont{Artiles et~al.}(2005)\citenamefont{Artiles, Gill, and
  Guta}}]{Guta_paper}
\bibinfo{author}{\bibfnamefont{L.~M.} \bibnamefont{Artiles}},
  \bibinfo{author}{\bibfnamefont{R.~D.} \bibnamefont{Gill}}, \bibnamefont{and}
  \bibinfo{author}{\bibfnamefont{M.~I.} \bibnamefont{Guta}},
  \emph{\bibinfo{title}{{An invitation to quantum tomography}}},
  \bibinfo{journal}{arXiv:math/0405595} \textbf{\bibinfo{volume}{67}},
  \bibinfo{pages}{109} (\bibinfo{year}{2005}).

\bibitem[{\citenamefont{Banaszek}(1998)}]{PhysRevA.57.5013}
\bibinfo{author}{\bibfnamefont{K.}~\bibnamefont{Banaszek}},
  \emph{\bibinfo{title}{Maximum-likelihood estimation of photon-number
  distribution from homodyne statistics}}, \bibinfo{journal}{Phys. Rev. A}
  \textbf{\bibinfo{volume}{57}}, \bibinfo{pages}{5013} (\bibinfo{year}{1998}).

\bibitem[{\citenamefont{Gill and Guta}(2003)}]{guta2003}
\bibinfo{author}{\bibfnamefont{R.}~\bibnamefont{Gill}} \bibnamefont{and}
  \bibinfo{author}{\bibfnamefont{M.}~\bibnamefont{Guta}},
  \emph{\bibinfo{title}{{An invitation to quantum tomography}}},
  \bibinfo{journal}{arXiv:quant-ph/0303020}  (\bibinfo{year}{2003}).

\bibitem[{\citenamefont{Albini et~al.}(2009)\citenamefont{Albini, {De Vito},
  and Toigo}}]{Albini2009}
\bibinfo{author}{\bibfnamefont{P.}~\bibnamefont{Albini}},
  \bibinfo{author}{\bibfnamefont{E.}~\bibnamefont{{De Vito}}},
  \bibnamefont{and} \bibinfo{author}{\bibfnamefont{A.}~\bibnamefont{Toigo}},
  \emph{\bibinfo{title}{{Quantum homodyne tomography as an informationally
  complete positive-operator-valued measure}}}, \bibinfo{journal}{Journal of
  Physics A: Mathematical and Theoretical} \textbf{\bibinfo{volume}{42}},
  \bibinfo{pages}{295302} (\bibinfo{year}{2009}).

\bibitem[{\citenamefont{Alquier et~al.}(2013)\citenamefont{Alquier, Meziani,
  and Peyr{\'{e}}}}]{Alquier2013}
\bibinfo{author}{\bibfnamefont{P.}~\bibnamefont{Alquier}},
  \bibinfo{author}{\bibfnamefont{K.}~\bibnamefont{Meziani}}, \bibnamefont{and}
  \bibinfo{author}{\bibfnamefont{G.}~\bibnamefont{Peyr{\'{e}}}},
  \emph{\bibinfo{title}{{Adaptive estimation of the density matrix in quantum
  homodyne tomography with noisy data}}}, \bibinfo{journal}{Inverse Problems}
  \textbf{\bibinfo{volume}{29}}, \bibinfo{pages}{075017}
  (\bibinfo{year}{2013}).

\bibitem[{Tro()}]{Tropp2016}
\bibinfo{note}{Tropp, J.A. (2016). \textit{The Expected Norm of a Sum of
  Independent Random Matrices: An Elementary Approach}. In: Houdré, C., Mason,
  D., Reynaud-Bouret, P., Rosiński, J. (eds) High Dimensional Probability VII.
  Progress in Probability, vol 71. Birkhäuser, see pgs. 173--202.}

\bibitem[{\citenamefont{Tropp}(2015)}]{Tropp}
\bibinfo{author}{\bibfnamefont{J.~A.} \bibnamefont{Tropp}},
  \emph{\bibinfo{title}{An introduction to matrix concentration inequalities}},
  \bibinfo{journal}{Foundations and Trends in Machine Learning}
  \textbf{\bibinfo{volume}{8}}, \bibinfo{pages}{1} (\bibinfo{year}{2015}).

\bibitem[{\citenamefont{Becker et~al.}(2022)\citenamefont{Becker, Datta, Lami,
  and Rouzé}}]{Becker22}
\bibinfo{author}{\bibfnamefont{S.}~\bibnamefont{Becker}},
  \bibinfo{author}{\bibfnamefont{N.}~\bibnamefont{Datta}},
  \bibinfo{author}{\bibfnamefont{L.}~\bibnamefont{Lami}}, \bibnamefont{and}
  \bibinfo{author}{\bibfnamefont{C.}~\bibnamefont{Rouzé}},
  \emph{\bibinfo{title}{{Classical shadow tomography for continuous variables
  quantum systems}}}, \bibinfo{journal}{arXiv:2211.07578}
  (\bibinfo{year}{2022}).

\bibitem[{\citenamefont{Leonhardt and Munroe}(1996)}]{Leonhardt96}
\bibinfo{author}{\bibfnamefont{U.}~\bibnamefont{Leonhardt}} \bibnamefont{and}
  \bibinfo{author}{\bibfnamefont{M.}~\bibnamefont{Munroe}},
  \emph{\bibinfo{title}{Number of phases required to determine a quantum state
  in optical homodyne tomography}}, \bibinfo{journal}{Phys. Rev. A}
  \textbf{\bibinfo{volume}{54}}, \bibinfo{pages}{3682} (\bibinfo{year}{1996}).

\bibitem[{\citenamefont{Wunsche}(1991)}]{Wunsche_1991}
\bibinfo{author}{\bibfnamefont{A.}~\bibnamefont{Wunsche}},
  \emph{\bibinfo{title}{Displaced fock states and their connection to
  quasiprobabilities}}, \bibinfo{journal}{Quantum Optics: Journal of the
  European Optical Society Part B} \textbf{\bibinfo{volume}{3}},
  \bibinfo{pages}{359} (\bibinfo{year}{1991}).

\bibitem[{\citenamefont{Huang et~al.}(2022)\citenamefont{Huang, Kueng, Torlai,
  Albert, and Preskill}}]{huang2021provablyefficient}
\bibinfo{author}{\bibfnamefont{H.-Y.} \bibnamefont{Huang}},
  \bibinfo{author}{\bibfnamefont{R.}~\bibnamefont{Kueng}},
  \bibinfo{author}{\bibfnamefont{G.}~\bibnamefont{Torlai}},
  \bibinfo{author}{\bibfnamefont{V.~V.} \bibnamefont{Albert}},
  \bibnamefont{and} \bibinfo{author}{\bibfnamefont{J.}~\bibnamefont{Preskill}},
  \emph{\bibinfo{title}{{Provably efficient machine learning for quantum
  many-body problems}}}, \bibinfo{journal}{Science}
  \textbf{\bibinfo{volume}{377}}, \bibinfo{pages}{1397} (\bibinfo{year}{2022}).

\bibitem[{\citenamefont{Hillery et~al.}(1984)\citenamefont{Hillery, O'Connell,
  Scully, and Wigner}}]{HILLERY1984121}
\bibinfo{author}{\bibfnamefont{M.}~\bibnamefont{Hillery}},
  \bibinfo{author}{\bibfnamefont{R.}~\bibnamefont{O'Connell}},
  \bibinfo{author}{\bibfnamefont{M.}~\bibnamefont{Scully}}, \bibnamefont{and}
  \bibinfo{author}{\bibfnamefont{E.}~\bibnamefont{Wigner}},
  \emph{\bibinfo{title}{Distribution functions in physics: Fundamentals}},
  \bibinfo{journal}{Physics Reports} \textbf{\bibinfo{volume}{106}},
  \bibinfo{pages}{121} (\bibinfo{year}{1984}).

\bibitem[{\citenamefont{Ferrie}(2011)}]{Ferrie_2011}
\bibinfo{author}{\bibfnamefont{C.}~\bibnamefont{Ferrie}},
  \emph{\bibinfo{title}{Quasi-probability representations of quantum theory
  with applications to quantum information science}}, \bibinfo{journal}{Reports
  on Progress in Physics} \textbf{\bibinfo{volume}{74}},
  \bibinfo{pages}{116001} (\bibinfo{year}{2011}).

\bibitem[{\citenamefont{W\"unsche}(1996)}]{PhysRevA.54.5291}
\bibinfo{author}{\bibfnamefont{A.}~\bibnamefont{W\"unsche}},
  \emph{\bibinfo{title}{Tomographic reconstruction of the density operator from
  its normally ordered moments}}, \bibinfo{journal}{Phys. Rev. A}
  \textbf{\bibinfo{volume}{54}}, \bibinfo{pages}{5291} (\bibinfo{year}{1996}).

\bibitem[{ter()}]{terminology}
\bibinfo{note}{Note, the words shadow, estimator and reconstructed state are
  used interchangeably throughout the paper.}

\bibitem[{smo()}]{smoothcutoff}
\bibinfo{note}{For example, for a regulator $R$ one can find a sufficiently
  large $N$ such that $||P_N R\rho R^\dag P_N - R \rho R^\dag|| < \epsilon$ for
  all density matrices $\rho$ and apply our results directly based on the
  triangle inequality. More generally, the bounds we prove on the projected
  shadows could be adapted to regulators with smooth cutoffs, but we leave this
  approach for future work.}

\bibitem[{\citenamefont{Holevo}(2011)}]{Holevo2011}
\bibinfo{author}{\bibfnamefont{A.}~\bibnamefont{Holevo}},
  \emph{\bibinfo{title}{{Probabilistic and Statistical Aspects of Quantum
  Theory}}} (\bibinfo{publisher}{Edizioni della Normale},
  \bibinfo{address}{Pisa}, \bibinfo{year}{2011}).

\bibitem[{\citenamefont{D'Ariano et~al.}(1994)\citenamefont{D'Ariano,
  Macchiavello, and Paris}}]{D'Ariano94}
\bibinfo{author}{\bibfnamefont{G.~M.} \bibnamefont{D'Ariano}},
  \bibinfo{author}{\bibfnamefont{C.}~\bibnamefont{Macchiavello}},
  \bibnamefont{and} \bibinfo{author}{\bibfnamefont{M.~G.~A.}
  \bibnamefont{Paris}}, \emph{\bibinfo{title}{Detection of the density matrix
  through optical homodyne tomography without filtered back projection}},
  \bibinfo{journal}{Phys. Rev. A} \textbf{\bibinfo{volume}{50}},
  \bibinfo{pages}{4298} (\bibinfo{year}{1994}).

\bibitem[{\citenamefont{Richter}(2000)}]{Richter2000}
\bibinfo{author}{\bibfnamefont{T.}~\bibnamefont{Richter}},
  \emph{\bibinfo{title}{Realistic pattern functions for optical homodyne
  tomography and determination of specific expectation values}},
  \bibinfo{journal}{Phys. Rev. A} \textbf{\bibinfo{volume}{61}},
  \bibinfo{pages}{063819} (\bibinfo{year}{2000}).

\bibitem[{\citenamefont{Lutterbach and Davidovich}(1997)}]{Lutterbach97}
\bibinfo{author}{\bibfnamefont{L.~G.} \bibnamefont{Lutterbach}}
  \bibnamefont{and}
  \bibinfo{author}{\bibfnamefont{L.}~\bibnamefont{Davidovich}},
  \emph{\bibinfo{title}{Method for direct measurement of the wigner function in
  cavity qed and ion traps}}, \bibinfo{journal}{Phys. Rev. Lett.}
  \textbf{\bibinfo{volume}{78}}, \bibinfo{pages}{2547} (\bibinfo{year}{1997}).

\bibitem[{\citenamefont{Bertet et~al.}(2002)\citenamefont{Bertet, Auffeves,
  Maioli, Osnaghi, Meunier, Brune, Raimond, and Haroche}}]{Bertet02}
\bibinfo{author}{\bibfnamefont{P.}~\bibnamefont{Bertet}},
  \bibinfo{author}{\bibfnamefont{A.}~\bibnamefont{Auffeves}},
  \bibinfo{author}{\bibfnamefont{P.}~\bibnamefont{Maioli}},
  \bibinfo{author}{\bibfnamefont{S.}~\bibnamefont{Osnaghi}},
  \bibinfo{author}{\bibfnamefont{T.}~\bibnamefont{Meunier}},
  \bibinfo{author}{\bibfnamefont{M.}~\bibnamefont{Brune}},
  \bibinfo{author}{\bibfnamefont{J.~M.} \bibnamefont{Raimond}},
  \bibnamefont{and} \bibinfo{author}{\bibfnamefont{S.}~\bibnamefont{Haroche}},
  \emph{\bibinfo{title}{Direct measurement of the wigner function of a
  one-photon fock state in a cavity}}, \bibinfo{journal}{Phys. Rev. Lett.}
  \textbf{\bibinfo{volume}{89}}, \bibinfo{pages}{200402}
  (\bibinfo{year}{2002}).

\bibitem[{\citenamefont{Vlastakis et~al.}(2013)\citenamefont{Vlastakis,
  Kirchmair, Leghtas, Nigg, Frunzio, Girvin, Mirrahimi, Devoret, and
  Schoelkopf}}]{Vlastakis13}
\bibinfo{author}{\bibfnamefont{B.}~\bibnamefont{Vlastakis}},
  \bibinfo{author}{\bibfnamefont{G.}~\bibnamefont{Kirchmair}},
  \bibinfo{author}{\bibfnamefont{Z.}~\bibnamefont{Leghtas}},
  \bibinfo{author}{\bibfnamefont{S.~E.} \bibnamefont{Nigg}},
  \bibinfo{author}{\bibfnamefont{L.}~\bibnamefont{Frunzio}},
  \bibinfo{author}{\bibfnamefont{S.~M.} \bibnamefont{Girvin}},
  \bibinfo{author}{\bibfnamefont{M.}~\bibnamefont{Mirrahimi}},
  \bibinfo{author}{\bibfnamefont{M.~H.} \bibnamefont{Devoret}},
  \bibnamefont{and} \bibinfo{author}{\bibfnamefont{R.~J.}
  \bibnamefont{Schoelkopf}}, \emph{\bibinfo{title}{{Deterministically Encoding
  Quantum Information Using 100-Photon Schrödinger Cat States}}},
  \bibinfo{journal}{Science} \textbf{\bibinfo{volume}{342}},
  \bibinfo{pages}{607 } (\bibinfo{year}{2013}).

\bibitem[{\citenamefont{Wang et~al.}(2016)\citenamefont{Wang, Gao, Reinhold,
  Heeres, Ofek, Chou, Axline, Reagor, Blumoff, Sliwa et~al.}}]{Wang16}
\bibinfo{author}{\bibfnamefont{C.}~\bibnamefont{Wang}},
  \bibinfo{author}{\bibfnamefont{Y.~Y.} \bibnamefont{Gao}},
  \bibinfo{author}{\bibfnamefont{P.}~\bibnamefont{Reinhold}},
  \bibinfo{author}{\bibfnamefont{R.~W.} \bibnamefont{Heeres}},
  \bibinfo{author}{\bibfnamefont{N.}~\bibnamefont{Ofek}},
  \bibinfo{author}{\bibfnamefont{K.}~\bibnamefont{Chou}},
  \bibinfo{author}{\bibfnamefont{C.}~\bibnamefont{Axline}},
  \bibinfo{author}{\bibfnamefont{M.}~\bibnamefont{Reagor}},
  \bibinfo{author}{\bibfnamefont{J.}~\bibnamefont{Blumoff}},
  \bibinfo{author}{\bibfnamefont{K.~M.} \bibnamefont{Sliwa}},
  \bibnamefont{et~al.}, \emph{\bibinfo{title}{{A Schrödinger cat living in two
  boxes}}}, \bibinfo{journal}{Science} \textbf{\bibinfo{volume}{352}},
  \bibinfo{pages}{1087} (\bibinfo{year}{2016}).

\bibitem[{\citenamefont{Cives-Esclop et~al.}(2000)\citenamefont{Cives-Esclop,
  Luis, and S{\'{a}}nchez-Soto}}]{Cives_Esclop_2000}
\bibinfo{author}{\bibfnamefont{A.}~\bibnamefont{Cives-Esclop}},
  \bibinfo{author}{\bibfnamefont{A.}~\bibnamefont{Luis}}, \bibnamefont{and}
  \bibinfo{author}{\bibfnamefont{L.~L.} \bibnamefont{S{\'{a}}nchez-Soto}},
  \emph{\bibinfo{title}{An eight-port detector with a local oscillator of
  finite intensity}}, \bibinfo{journal}{Journal of Optics B: Quantum and
  Semiclassical Optics} \textbf{\bibinfo{volume}{2}}, \bibinfo{pages}{526}
  (\bibinfo{year}{2000}).

\bibitem[{\citenamefont{Cova et~al.}(1987)\citenamefont{Cova, Ripamonti, and
  Lacaita}}]{SPAD}
\bibinfo{author}{\bibfnamefont{S.}~\bibnamefont{Cova}},
  \bibinfo{author}{\bibfnamefont{G.}~\bibnamefont{Ripamonti}},
  \bibnamefont{and} \bibinfo{author}{\bibfnamefont{A.}~\bibnamefont{Lacaita}},
  \emph{\bibinfo{title}{Avalanche semiconductor detector for single optical
  photons with a time resolution of 60 ps}}, \bibinfo{journal}{Nuclear
  Instruments and Methods in Physics Research Section A: Accelerators,
  Spectrometers, Detectors and Associated Equipment}
  \textbf{\bibinfo{volume}{253}}, \bibinfo{pages}{482} (\bibinfo{year}{1987}).

\bibitem[{\citenamefont{Lita et~al.}(2008)\citenamefont{Lita, Miller, and
  Nam}}]{TES}
\bibinfo{author}{\bibfnamefont{A.~E.} \bibnamefont{Lita}},
  \bibinfo{author}{\bibfnamefont{A.~J.} \bibnamefont{Miller}},
  \bibnamefont{and} \bibinfo{author}{\bibfnamefont{S.~W.} \bibnamefont{Nam}},
  \emph{\bibinfo{title}{Counting near-infrared single-photons with 95\%
  efficiency}}, \bibinfo{journal}{Opt. Express} \textbf{\bibinfo{volume}{16}},
  \bibinfo{pages}{3032} (\bibinfo{year}{2008}).

\bibitem[{\citenamefont{Hangleiter and Eisert}(2022)}]{Hangleiter22}
\bibinfo{author}{\bibfnamefont{D.}~\bibnamefont{Hangleiter}} \bibnamefont{and}
  \bibinfo{author}{\bibfnamefont{J.}~\bibnamefont{Eisert}},
  \emph{\bibinfo{title}{{Computational advantage of quantum random sampling}}},
  \bibinfo{journal}{arXiv:2206.04079}  (\bibinfo{year}{2022}).

\bibitem[{\citenamefont{Chen et~al.}(2021)\citenamefont{Chen, Yu, Zeng, and
  Flammia}}]{Chen21}
\bibinfo{author}{\bibfnamefont{S.}~\bibnamefont{Chen}},
  \bibinfo{author}{\bibfnamefont{W.}~\bibnamefont{Yu}},
  \bibinfo{author}{\bibfnamefont{P.}~\bibnamefont{Zeng}}, \bibnamefont{and}
  \bibinfo{author}{\bibfnamefont{S.~T.} \bibnamefont{Flammia}},
  \emph{\bibinfo{title}{Robust shadow estimation}}, \bibinfo{journal}{PRX
  Quantum} \textbf{\bibinfo{volume}{2}}, \bibinfo{pages}{030348}
  (\bibinfo{year}{2021}).

\end{thebibliography}

\end{document}